\documentclass[11pt]{article}

\usepackage[letterpaper,margin=1in]{geometry}
\usepackage{setspace}
\pdfoutput=1
\usepackage[T1]{fontenc}
\usepackage[utf8]{inputenc}
\usepackage{lmodern}
\usepackage{microtype}

\usepackage{amsmath, amssymb, amsthm}
\usepackage{bm}

\usepackage{booktabs}
\usepackage{array}
\usepackage{graphicx}
\usepackage{xcolor}
\usepackage{caption}

\usepackage[round, authoryear]{natbib}
\usepackage{hyperref}
\hypersetup{colorlinks=true, linkcolor=black, citecolor=black, urlcolor=blue}

\usepackage{tikz}
\usetikzlibrary{positioning, arrows.meta, decorations.pathreplacing, calc}

\newcolumntype{L}[1]{>{\raggedright\arraybackslash}p{#1}}

\newtheorem{proposition}{Proposition}

\theoremstyle{definition}
\newtheorem{definition}{Definition}
\newtheorem{example}{Example}
\theoremstyle{remark}
\newtheorem{remark}{Remark}

\newcommand{\sv}{\mathbf{s}}
\newcommand{\pv}{\mathbf{p}}
\newcommand{\Lam}{\boldsymbol{\Lambda}}      
\newcommand{\wv}{\mathbf{w}}
\newcommand{\R}{\mathbb{R}}
\newcommand{\zmode}{z}              
\newcommand{\Phimech}{\Phi}         
\newcommand{\pinertia}{\chi}        
\newcommand{\clip}{\mathrm{clip}}

\begin{document}

\title{\bfseries Indirect Geoeconomic Influence:\\
A Switching Dynamical Systems Framework for Mechanism Design}

\author{
  Nikolos Gurney\thanks{Corresponding author: \texttt{gurney@ict.usc.edu}}
  \and Boxi Fu
  \and Soham Hans
  \and Volkan Ustun
}

\date{
  Institute for Creative Technologies, University of Southern California\\[4pt]
  07 August, 2026
}

\maketitle

\renewcommand{\thefootnote}{\fnsymbol{footnote}}
\footnotetext[2]{FUNDING: Research was sponsored by the Army Research Office and was accomplished under Cooperative Agreement Number W911NF-25-2-0040. The views and conclusions contained in this document are those of the authors and should not be interpreted as representing the official policies, either expressed or implied, of the Army Research Office or the U.S. Government. The U.S. Government is authorized to reproduce and distribute reprints for Government purposes notwithstanding any copyright notation herein.}
\renewcommand{\thefootnote}{\arabic{footnote}}

\begin{abstract}
\noindent
We develop a formal framework for analyzing \emph{indirect} geoeconomic
influence. The influencing state (sender) does not attempt to change a target
nation's policy directly. Instead, the sender restructures the target's 
internal political economy so that its own citizens, firms, and institutions
generate the compliance pressure. The framework rests on a switching
dynamical system (SDS) in which a target's political economy evolves
under mode-dependent rules. We analyze two modes: a permissive mode, in
which a mechanism transmits pressure toward the sender's
preferred policy, and a contested mode, entered naturally once the
target detects and attributes the mechanism. Crucially, the
sender's mechanism design shapes the transition into the contested
mode rather than paying a static toll for legibility. This inverts the
usual regime-switching problem: rather than estimating a latent
transition kernel from data, the designer engineers the kernel to steer
regime occupancy over a planning horizon. A structured \emph{switch
vector} decomposes any mechanism along discrete design dimensions, and a
combinatorial optimizer searches this space for high-performing archetypes
scored on compliance, time-to-threshold, and a durability ratio. We
characterize mode-conditional equilibria and derive comparative statics on
credibility and legibility, showing that the legibility penalty is scaled
by the salience of the government channel and therefore interacts with
the mechanism's cost incidence. We illustrate the framework with two
stylized mechanisms, report a proof-of-concept simulation over a
reduced switch space, and report a small blind-audit study of the 
pipeline's optional language-model generation stage.
\end{abstract}

\medskip
\noindent\textbf{Key words:} geoeconomic influence; mechanism design;
switching dynamical systems; hybrid systems; economic statecraft;
decision analysis for policy design

\bigskip

\section{Introduction}
\label{sec:intro}
Nations increasingly pursue geopolitical objectives through economic
means. Trade policy \citep{scholvin2018geo}, investment screening
\citep{Danzman2024eu}, technical standards \citep{zuniga2024geopolitics},
and financial infrastructure \citep{FarrellNewman2019} have become
instruments of statecraft deployed not to produce economic outcomes in
isolation but to engineer the incentive environments of target states.
The resulting literature on geoeconomic influence has produced important
taxonomic and structural insights
\citep{Baldwin2020,Drezner1999,FarrellNewman2019}. However, it has
largely treated the target nation as a unitary actor that responds to
external pressure as a single strategic entity. This abstraction
obscures the internal transmission structure through which influence
actually operates. That structure provides the pathways by which
external mechanisms route pressure through a target's civil society,
private sector, and government factions to produce a policy outcome.

This paper develops a formal framework for analyzing indirect geoeconomic 
influence (IGI), in which a state (the sender) seeks a policy outcome in 
a target nation not by mandating it directly, but by engineering the
target's internal political economy so that its own citizens, firms, and
institutions generate the compliance pressure. The framework decomposes
influence mechanisms along discrete structural dimensions --- choices
about how pressure is timed, routed, and sustained within the
target --- and a combinatorial optimizer searches this space to identify
high-scoring mechanism configurations. It operates \emph{generatively}
rather than inferentially, meaning it is designed to discover and evaluate
novel mechanisms before deployment, not to recover behavioral patterns from
historical data.

The framework's central modeling assumption is that a target's political
economy under influence does not evolve under a single set of rules. It
switches. Before a mechanism is detected and attributed, the target's
government channel transmits externally originated pressure much as its
private and civil-society channels do; the sender's investment loads
momentum that moves policy toward the sender's preferred outcome. Once
the target's government identifies the mechanism, the same channel
\emph{reverses}: the government becomes a net source of counter-pressure,
and its resistance capacity rises as it mobilizes. These are
qualitatively distinct dynamical regimes, and the boundary between them
is not exogenous. It is a function of the mechanism's design --- its
legibility, external support, and cost incidence, among other design
choices --- together with the starting conditions. Modeling this boundary,
and handling the sender design choices over it, is the analytical core of
our paper.

This orientation inverts the standard use of switching dynamical
systems. In the tradition initiated by \citet{Hamilton1989}, one
observes a time series generated by a system that operates under
distinct regimes and estimates, after the fact, both the latent regime
sequence and the transition kernel governing it. The IGI framework runs
this machinery backward. The transition kernel is not estimated from
data but \emph{designed}: the sender's switch vector shapes when the
target enters a resistant regime and the policy problem
becomes one of steering regime occupancy. Our approach facilitates, for
example, maximizing time spent accumulating compliance in the permissive
mode before the contested mode engages. Regime change is the object of
design, not of inference. To our knowledge this reorientation is novel in
the geoeconomic setting, and it is what makes the ``switching'' in
switching dynamical system a technical commitment rather than a metaphor.

The framework is developed for the interstate case but its logic extends
to any setting in which an actor induces behavioral change in a complex
system through indirect incentive design. Inducing private-sector firms
to internalize national-security considerations in sourcing decisions
without direct mandates is structurally identical: an actor engineers an
incentive environment, pressure routes through internal subsystems, and
a policy outcome emerges endogenously. Supply-chain resilience,
conditional development finance, and multilateral coordination all fall
within scope. 

The framework is positioned as a decision-analytic tool for designing
policy interventions where randomized evidence is unavailable and
outcomes depend on an adversary's endogenous response. We argue this problem
fits squarely within the concerns of structured decision analysis and
adversarial risk analysis \citep{riosinsua2009adversarial}. Its intended use
is prescriptive: to enumerate, compare, and stress-test candidate policy
responses before deployment, producing policy-interpretable outputs
suitable for simulation and wargaming.

\section{Related Work}
\label{sec:lit}
The study of geoeconomic influence draws on several distinct literatures
\citep{mohr2025geoeconomics}. Although no single one provides the formal
dynamic framework this paper develops, each contributes theoretical
grounding. We situate our contribution within four threads: economic
statecraft, the formal modeling of sanctions and coercion,
regime-switching and hybrid dynamical systems, and mechanism design.

\subsection{Economic Statecraft}
\label{sec:lit.econ}
\citet{Baldwin2020} provides the foundational vocabulary and evaluative 
framework for the field. \citet{Aggarwal2021} traces how that vocabulary 
has been stretched by contemporary practice, as statecraft has moved 
from discrete instruments toward the reshaping of trade and technology 
regimes themselves. A central component is methodological: Baldwin
argues that the effectiveness of economic instruments should be assessed
comparatively, against a counterfactual, rather than against an absolute
standard of full compliance. We adopt his basic framing of influence as
an attempt by a sender to alter a target's behavior through economic
means, and his insistence that instrument choice be evaluated on
cost-effectiveness grounds. Baldwin's framework does not, however,
provide a formal dynamic model of how instruments produce compliance,
and his taxonomy is organized by instrument type (sanctions, aid, trade
policy) rather than by the structural dimensions of how influence
propagates through a target's internal political economy. We propose a
complementary decomposition at a finer level of structural resolution.

\citet{FarrellNewman2019} offer a theoretically developed contribution
in which states and their relationships are modeled as a network. They
argue that asymmetric network structures concentrate coercive power in
hub nodes, and that states with jurisdiction over those hubs can
weaponize interdependence through the panopticon effect (hub control
generates surveillance advantages) and the chokepoint effect (hub
control enables denial of access). Their framework is structural and
external: it explains why hub states possess leverage but treats the
target's response as a function of network position rather than internal
political dynamics. The transmission structure through which external
pressure routes through a target's subsystems toward a policy outcome is
not theorized. This paper picks up where they leave off, proposing a
formal model of that internal transmission structure, and, critically,
of how it changes once the target reacts.

\citet{BlackwillHarris2016} offer a practitioner-oriented survey of
geoeconomic tools and argue that the United States has underutilized them
relative to rising powers. Their taxonomy of instrument categories is
descriptively useful for decomposing the structural dimensions of
influence propagation, but does not provide a formal theoretical
account.

\subsection{Formal Modeling of Sanctions and Coercion}
\label{sec:lit.formal}
\citet{Drezner1999} provides a rigorous game-theoretic treatment of
economic sanctions. His conflict-expectations model explains why
sanctions are imposed frequently despite a poor track record, by showing
that adversarial relationships generate incentives to sanction that are
independent of the probability of success. As Drezner states, the model
rests on the assumption that governments act as rational unitary actors.
This abstraction forecloses analysis of the internal transmission
structure through which external pressure reaches a policy outcome:
Drezner's model has no civil society, no private sector, and no
government factions. The IGI framework relaxes this assumption as its
central move, decomposing the target into interacting subsystems whose
differential response to external mechanisms determines the outcome.

\citet{kaempfer1988theory} represent the most direct attempt to
open the target's internal political economy to formal analysis. Working
in the public-choice tradition, they model interest groups within the
target as political demanders and suppliers of policy, and ask how
external sanctions affect the equilibrium among them. They offer an insight
that anticipates the core of our model: the political effects of sanctions
depend on how economic pressure distributes across domestic constituencies.
The critical difference is temporal, as their apparatus is static. It
characterizes a political equilibrium and performs comparative statics on
that set point. It has no time dimension, no accumulation of durable capital,
no mechanism momentum, and no notion that the target's response rules change
once it reacts. The IGI framework extends their intuition into a dynamic,
mode-switching setting in which the routing and persistence of
pressure plus the timing of the target's reaction determine the
outcome. More recently, \citet{liou2023pressures} demonstrate
empirically that sanctions alter the bargaining environment between
target governments and domestic civil-society campaigns, documenting the
civil-society transmission channel the pressure vector formalizes, though
they do not model the dynamics of that transmission. \citet{Drezner2015} 
documents a parallel private-sector channel in which targeted financial 
measures operate less through direct state compliance than through the risk 
calculations of banks and intermediaries.

\citet{schoppa1993two} identified a distinction that anticipates the IGI
framework's central theoretical move. Analyzing variation in the
effectiveness of U.S. pressure on Japanese trade policy, he argues that
direct strategies systematically underperform relative to synergistic
strategies that deliberately activate and amplify domestic reform
coalitions within the target. His concept of reverberation in which a
sender's demands strengthen the hand of domestic reformers is a
qualitative description of the indirect-influence logic this paper
formalizes. Schoppa's work is foundational to the present effort.
However, its typology is qualitatively inferred from case comparison
rather than a formal dynamic model; it is specific to bilateral trade
negotiations; and it is taxonomic rather than generative, classifying
observed strategies without providing a method for discovering novel ones.

\subsection{Regime-Switching and Hybrid Dynamical Systems}
\label{sec:lit.sds}
\citet{Hamilton1989} establishes the methodological foundation on which
this paper builds. His insight is that a single set of dynamic equations
is insufficient to describe a macroeconomic system that behaves under
qualitatively different rules at different times. For example, gross
domestic product dynamics switch between an expansion regime and a
recession regime. The key contribution for our purposes is not the
specific Markov-switching apparatus but the underlying philosophy: some
systems are better characterized as operating under distinct behavioral
regimes than as following a single continuous dynamic. Political economies
under geoeconomic pressure exhibit exactly this structure, switching between
permissive and contested regimes as a mechanism is detected. \citet{Kim1994} 
develops the filtering machinery for this class, approximating the 
regime-conditional state distribution by collapsing it into a mixture 
weighted by regime probabilities. Our expected-trajectory formulation 
(Section~\ref{sec:modeswitch}) performs the analogous collapse in the 
generative direction: mode occupancy is propagated forward as a probability 
and the dynamics blended accordingly, rather than inferred backward from data.
\citet{GhahramaniHinton2000} demonstrate that switching state-space models
are productive analytical tools well outside economics, applying variational
inference to switching dynamics. When the mode boundary is driven by an
accumulating state, the object is a \emph{hybrid} dynamical system in the
sense of \citet{van2007introduction}: continuous dynamics within modes,
transitions governed by that state. The proof-of-concept in this paper
tracks the expected trajectory under an exposure-driven detection hazard,
and notes the deterministic-guard and full-stochastic variants as
extensions.

A critical difference between this literature and the present paper
deserves emphasis. \citet{Hamilton1989} and \citet{GhahramaniHinton2000}
work in the \emph{inferential} direction: they fit switching models to
observed data to recover latent regimes and estimate transition
probabilities after the fact. This paper works in the opposite
direction. A vector of design choices is specified ex ante by a policy
designer who enumerates, compares, and simulates mechanism
configurations before deployment, and those choices shape the transition
structure itself. The IGI framework is generative rather than
inferential, prescriptive rather than descriptive. This reorientation
from using an SDS to explain observed dynamics to using it to
\emph{design} desired ones is, to our knowledge, novel in the
geoeconomic context.

\subsection{Mechanism Design}
\label{sec:lit.mech}
The indirect-influence framing shares the philosophical orientation of
the mechanism-design tradition originating with \citet{Hurwicz1973}:
rather than mandating outcomes directly, the designer engineers the
incentive environment such that the target's own actors generate the
desired outcome as an equilibrium. The IGI framework does not operate in
the formal mechanism-design tradition (we do not specify complete games
with typed agents, strategy spaces, and implementation theorems n the 
sense of \citet{Maskin1999}) but the animating intuition is the same. 
Country~$A$ does not set Country~$B$'s policy; it restructures $B$'s 
incentive environment so that $B$'s own citizens, firms, and institutions 
generate the compliance pressure. A second departure is computational. 
Where classical mechanism design asks what outcomes are implementable in 
principle, the algorithmic tradition \citep{NisanRoughgarden2007} asks 
what a designer can actually compute. The IGI framework sits on that 
side, trading the generality of a full implementation result for an 
enumerable design space a policy analyst can search exhaustively.

\section{The IGI Framework}
\label{sec:framework}
We model geoeconomic influence as an indirect engineering problem.
Country~$A$ (the sender) seeks a policy outcome in Country~$B$ (the
target) but does not directly set $B$'s policy. Instead, $A$ deploys a
\emph{mechanism} that loads momentum into $B$'s system through targeted
behavioral channels. This momentum routes through $B$'s internal
subsystems (government, civil society, and the private sector) and
resolves against $B$'s institutional resistance capacity. The policy
outcome is endogenous to $B$'s own dynamics, not imposed from outside,
and $B$'s dynamics are not fixed: once $B$ detects and attributes the
mechanism, its government channel switches from conduit to resistor.

\subsection{Overview of the Pipeline}
\label{sec:pipeline}

\begin{figure}[ht]
  \centering
  \begin{tikzpicture}[
      node distance = 0.45cm and 0.5cm,
      every node/.style = {font=\small},
      primary/.style = {rectangle, rounded corners=2pt, draw=black!70,
        fill=blue!25, text width=2.3cm, align=center, minimum height=1.1cm, inner sep=4pt},
      stage/.style = {rectangle, rounded corners=2pt, draw=black!50,
        fill=blue!8, text width=2.3cm, align=center, minimum height=1.1cm, inner sep=4pt},
      optional/.style = {rectangle, rounded corners=2pt, draw=black!40,
        fill=blue!4, text width=2.3cm, align=center, minimum height=1.1cm, inner sep=4pt, dashed},
      arr/.style = {-{Stealth[length=5pt]}, thick, draw=black!60},
  ]
  \node[primary] (sv) {\textbf{Switch Vector}\\[2pt]
     \scriptsize Structural dimensions of mechanism design};
  \node[stage, right=of sv] (opt) {\textbf{Optimizer}\\[2pt]
     \scriptsize Search over switch combinations};
  \node[stage, right=of opt] (ra) {\textbf{Ranked Archetypes}\\[2pt]
     \scriptsize Optimized abstract mechanism configurations};
  \node[stage, below right=1.1cm and 0.5cm of ra] (ms) {\textbf{Model Scoring}\\[2pt]
     \scriptsize Cost-effectiveness and durability ranking};
  \node[optional, left=of ms] (llm) {\textbf{LLM Generation}\\[2pt]
     \scriptsize Concrete mechanism instantiation};
  \node[optional, left=of llm] (cm) {\textbf{Concrete Mechanisms}\\[2pt]
     \scriptsize Instruments with formal parameterization};
  \node[primary, left=of cm] (pb) {\textbf{Playbook}\\[2pt]
     \scriptsize Actionable mechanism library};
  \draw[arr] (sv)  -- (opt);
  \draw[arr] (opt) -- (ra);
  \draw[arr] (ra.east) -| (ms.north);
  \draw[arr] (ms)  -- (llm);
  \draw[arr] (llm) -- (cm);
  \draw[arr] (cm)  -- (pb);
  \draw[decorate, decoration={brace, amplitude=5pt}, draw=black!40, thin]
    ([yshift=-6pt]llm.south west) -- ([yshift=-6pt]cm.south east)
    node[midway, below=5pt, font=\scriptsize\itshape, text=black!60] {optional};
  \end{tikzpicture}
  \caption{The IGI framework pipeline. Dark boxes are primary inputs and
    outputs; light boxes are intermediate processing stages; dashed boxes
    are optional. The pipeline transforms a structured switch vector
    through combinatorial optimization into scored, actionable mechanism
    archetypes. An optional language-model generation stage translates
    archetypes into concrete candidate policy instruments; we report a
    proof-of-concept run of that stage in Section~\ref{sec:llm-stage}.}
  \label{fig:pipeline}
\end{figure}

The pipeline (Figure~\ref{fig:pipeline}) operates as follows. Any
influence mechanism is decomposed into structural dimensions, represented
as a \emph{switch vector} $\sv$. A combinatorial optimizer searches the
Cartesian product of switch values for archetypes, or optimized abstract
mechanisms, that maximize predicted compliance under the model of
Section~\ref{sec:model}. High-scoring archetypes are scored on
cost-effectiveness metrics (Section~\ref{sec:assessment}) and,
optionally, translated by a language-model generation stage into
candidate policy-instrument descriptions for further evaluation,
wargaming, or simulation.

\subsection{The Switch Vector}
\label{sec:switch}

\begin{definition}[Switch Vector]
A \emph{switch vector} $\sv = (s_1, \ldots, s_d)$ is a $d$-dimensional
vector in which each component $s_i$ takes a value from a finite
candidate set $\mathcal{S}_i$. The Cartesian product $\mathcal{S} =
\prod_{i=1}^{d} \mathcal{S}_i$ defines the \emph{mechanism space}: the
set of all abstract mechanism configurations under the assumed
decomposition.
\end{definition}

Table~\ref{tab:switches} presents eleven notional dimensions that
illustrate an architectural decomposition of geoeconomic mechanisms. The
dimensions are drawn from a review of geoeconomic instruments but are not
definitive; they are a starting point for collaborative refinement rather
than a finalized ontology. Not all switches are active in every
mechanism: \emph{contingent} switches may take null values and drop from
the model; \emph{structural} switches (e.g., locus of pressure) are
always active.

Two dimensions require explicit disambiguation because they are easily
conflated. \emph{Locus of pressure} identifies which of $B$'s internal
subsystems (government, private sector, civil society, mixed) bears the 
primary pressure; it is the routing choice, and it is what the model's 
pressure vector (Section~\ref{sec:pressure}) is defined over. \emph{Target
population} identifies which stratum of $B$'s society the mechanism
addresses (elite, mass, institutional, networked). It conditions how
readily pressure originating at that stratum reaches the routed
subsystem, but it does not itself determine routing. A mechanism may, for
example, address a networked professional population while routing
pressure through the private sector.

\begin{table}[ht]
  \centering
  \caption{Notional Mechanism Switch Vector}
  \label{tab:switches}
  \renewcommand{\arraystretch}{1.2}
  \begin{tabular}{L{3.8cm} p{9.3cm}}
    \toprule
    \textbf{Switch} & \textbf{Candidate Values} \\
    \midrule
    Influence timing       & Latent / Active / Mixed \\
    Target population      & Elite / Mass / Institutional / Networked \\
    Propagation pathway    & Direct / Cascade / Systemic \\
    Behavioral channel     & Divestment / Alignment / Mobility / Consumption \\
    Credibility mechanism  & Legal / Institutional / Market / Demonstrated \\
    Threat structure       & Automatic / Discretionary / Graduated \\
    Momentum persistence   & Low / Medium / High \\
    Feedback polarity      & Stabilizing / Destabilizing / Neutral \\
    Locus of pressure      & Government / Private sector / Civil society / Mixed \\
    Cost bearer            & Country $A$ / Country $B$ / Shared \\
    Legibility to $B$      & Transparent / Obscured / Deniable \\
    \bottomrule
  \end{tabular}
  \smallskip\\
  {\small\textit{Note.} Each row defines one candidate dimension of the
    switch vector $\sv$. The optimizer searches over combinations of
    these values to identify high-scoring archetypes. These dimensions
    are notional.}
\end{table}

\subsection{Optimization over Mechanism Space}
\label{sec:optimization_overview}
For a given geopolitical objective, an optimizer searches $\mathcal{S}$
for switch vectors that maximize a compliance objective (defined in
Section~\ref{sec:opt_problem}). The discrete, enumerable structure of
$\mathcal{S}$ means that for small switch spaces exhaustive search is
feasible; for larger spaces, gradient-free methods (genetic algorithms,
Bayesian optimization) trade coverage for speed. The proof-of-concept in
Section~\ref{sec:poc} uses exhaustive enumeration over a reduced space.
High-scoring archetypes serve as abstract blueprints that specify
\emph{how} influence is structured, routed, and sustained, without yet
specifying a concrete policy instrument. An interpretation stage, by
policy experts or via a language-model generation process, then
contextualizes each archetype into candidate instrument descriptions
reflecting the institutional specifics of the target.

\section{Formal Model}
\label{sec:model}
The formal model is a minimal switching dynamical system designed to
distinguish mechanisms along the dimensions of the switch vector. It
rests on three continuous state variables, a discrete mode variable, a
pressure vector representing $B$'s internal political economy, and a set
of switch-dependent functional parameters. The defining feature relative
to the sanctions-modeling literature is that $B$'s response rules are
\emph{mode-dependent}, and the mode transition is endogenous to and
shaped by $A$'s mechanism design.

\subsection{State Space and Modes}
\label{sec:states}
Three continuous state variables characterize the system at each period
$t$:
\begin{itemize}
  \item $y_t \in [0,1]$: Country $B$'s \emph{policy position}, where $0$
        is full resistance to $A$'s preferred outcome and $1$ is full
        compliance.
  \item $k_t \in \R_{\geq 0}$: \emph{structural entrenchment stock} ---
        accumulated leverage that persists independently of $A$'s ongoing
        investment (installed standards, certified exporters, asset
        positions held by $B$'s population).
  \item $m_t \in \R_{\geq 0}$: \emph{mechanism momentum} --- the
        mechanism's current activation level, which may persist after $A$
        reduces investment (high momentum) or decay rapidly (low
        momentum).
\end{itemize}

The system additionally carries a discrete \emph{mode}
$\zmode_t \in \{\mathrm{P}, \mathrm{C}\}$: the permissive mode
($\mathrm{P}$), in which the mechanism operates before the target has
detected and attributed it, and the contested mode ($\mathrm{C}$), in
which the target's government has identified the mechanism and mobilized
against it. The mode governs which rules the continuous states obey
(Section~\ref{sec:dynamics}), and the transition $\mathrm{P}\to\mathrm{C}$
is driven by an accumulating \emph{exposure stock}
$E_t \in \R_{\geq 0}$ (Section~\ref{sec:modeswitch}).

$A$'s mechanism is activated through a non-negative action variable
$a_t \geq 0$ representing investment intensity at time $t$, with total
investment bounded by budget $\bar{C}$: $\sum_{t} a_t \leq \bar{C}$.

For interpretive purposes, $y_t$ admits a categorical overlay: $[0,0.2)$
active resistance; $[0.2,0.5)$ passive non-compliance; $[0.5,0.8)$
partial compliance; $[0.8,1.0]$ full compliance. These bands are labels
on the continuous state, not additional modes; the switching structure of
the model lives in $\zmode_t$, not in the $y$-bands.

\subsection{Internal Pressure Vector}
\label{sec:pressure}
Rather than treating $B$ as a unitary actor, the model distinguishes
three internal subsystems whose behavior transmits $A$'s mechanism to
$B$'s policy outcome:
\begin{itemize}
  \item $p_t^G$: pressure from or within $B$'s \emph{government},
        including internal faction dynamics and resistance capacity.
  \item $p_t^P$: pressure from $B$'s \emph{population} and civil society.
  \item $p_t^F$: pressure from $B$'s \emph{private sector} and firms.
\end{itemize}
These form the vector $\pv_t = (p_t^G, p_t^P, p_t^F)^\top$. The
\emph{routing vector}
$\Lam(\sv,\zmode_t) = (\lambda^G, \lambda^P, \lambda^F)^\top \in \R^3$
distributes scalar mechanism momentum across the three pressure channels.
Components may be negative. The routing depends on the mode: the
government channel that transmits pressure in the permissive mode
reverses sign in the contested mode.\footnote{We treat $\Lam$ as a vector
rather than a matrix because momentum $m_t$ is scalar in the present
specification; a richer model with a vector-valued momentum state would
promote $\Lam$ to a matrix without altering the interpretation.} The 
locus of pressure switch (Table~\ref{tab:switches}) selects how a
mechanism's momentum is distributed across these three channels. 
Note that the fourth value, mixed, distributes pressure across the three 
internal subsystems. The pressure vector has three components; the 
four-valued switch determines their relative loading, not their number. 
In general, this demonstrates how pressure vectors can function as 
discrete or relative instruments.

\subsection{Mode-Dependent Dynamics}
\label{sec:dynamics}
Within a given mode, the three continuous states evolve according to
\begin{align}
  y_{t+1} &= \pinertia\, y_t + \wv(\sv)^\top \pv_t
             - \beta(\zmode_t)\, r(y_t, k_t),
  \label{eq:policy} \\[3pt]
  k_{t+1} &= k_t + \mu(\sv)\, a_t - \kappa\, k_t,
  \label{eq:entrenchment} \\[3pt]
  m_{t+1} &= \rho(\sv)\, m_t + \nu(\sv)\, k_t + \eta(\sv)\, a_t,
  \label{eq:momentum}
\end{align}
with $y_{t+1}$ clamped to $[0,1]$ each period. Here
$\pinertia \in (0,1]$ is the \emph{policy-position inertia}: the fraction
of the current policy position carried into the next period. The baseline
analytic treatment sets $\pinertia = 1$, so that policy position is a
pure accumulator and \eqref{eq:policy} reduces to an increment on $y_t$;
the simulation retains $\pinertia$ as a free parameter
(Section~\ref{sec:poc}).\footnote{We use $\pinertia$ rather than
$\lambda$ for this parameter to avoid collision with the routing
components $\lambda^G,\lambda^P,\lambda^F$ of
Section~\ref{sec:pressure}.}

The weight vector $\wv(\sv) = (w^G, w^P, w^F)^\top$ gives the salience of
each subsystem for the policy outcome. Its dependence on $\sv$ is a
modeling assumption: the salience is taken to follow the \emph{cost
bearer} switch, so that when $A$ bears the mechanism's cost weight
concentrates on the government channel ($w^G$ large), when $B$ bears it
weight shifts toward $B$'s own firms and population ($w^G$ small), and
shared cost weights the channels evenly. This encodes the intuition that
whoever funds a mechanism shapes which of $B$'s subsystems it most
directly loads; like the other switch-to-parameter mappings, it is an
object of calibration rather than a derived quantity. In \eqref{eq:policy} 
the resistance term $r(y_t,k_t)$ satisfies $\partial r/\partial y_t > 0$
(resistance rises with compliance) and $\partial r/\partial k_t < 0$
(entrenchment erodes resistance). The scalar $\beta(\zmode_t) > 0$ is
$B$'s baseline resistance capacity, and it is mode-dependent:
$\beta(\mathrm{C}) \geq \beta(\mathrm{P})$, reflecting that a government
that has attributed the mechanism mobilizes resistance capacity it was
not previously expending. In \eqref{eq:entrenchment} structural stock
accumulates with $A$'s investment at rate $\mu(\sv)>0$ and decays at rate
$\kappa>0$. In \eqref{eq:momentum} momentum persists at rate
$\rho(\sv)\in[0,1)$, is fed by structural stock at rate $\nu(\sv)$, and
receives direct injection from $A$'s action at rate $\eta(\sv)$.

The pressure vector is
\begin{equation}
  \pv_t = \Lam(\sv, \zmode_t)\, m_t\, \delta,
  \label{eq:pressure_vec}
\end{equation}
where $\delta \in [0,1]$ is $A$'s credibility as perceived by $B$'s
internal actors. The routing vector is constructed in two steps. A base
routing $\Lam_0(\sv)$ is determined by the locus-of-pressure switch; then
a legibility adjustment is applied to the government component:
\begin{equation}
  \lambda^G(\sv,\zmode_t) =
  \begin{cases}
    \lambda_0^G(\sv) + g\bigl(\ell(\sv)\bigr), & \zmode_t = \mathrm{P},\\[4pt]
    \lambda_0^G(\sv) - \gamma, & \zmode_t = \mathrm{C},
  \end{cases}
  \label{eq:routing}
\end{equation}
where $\ell(\sv)$ is the legibility switch value, ordered
$\mathrm{Deniable} \prec \mathrm{Obscured} \prec \mathrm{Transparent}$;
$g(\cdot)\leq 0$ is a non-positive legibility modifier, non-increasing in
that order (more transparent mechanisms provoke more government
resistance even before full attribution); and $\gamma > 0$ is the
sign-reversing penalty that engages when the target enters the contested
mode. The population and private-sector components $\lambda^P, \lambda^F$
are mode-invariant in the baseline specification.\footnote{Encoding
legibility in the routing structure, rather than as a scalar attenuation
applied uniformly to all pressure, is the more structurally meaningful
choice: transparency activates targeted government counter-pressure
rather than uniformly damping every channel, and locating it in the
routing avoids double-counting against the exposure dynamics of
Section~\ref{sec:modeswitch}. It also yields the sharper comparative
static of Proposition~\ref{prop:legibility}.}

Equation~\eqref{eq:routing} makes precise the sense in which the model
switches. In the permissive mode the government channel transmits or
mildly resists, if the mechanism is legible. In the contested mode, it
reverses. The magnitude of the swing, and how close a permissive-mode
mechanism already sits to the sign boundary, are both design choices
encoded in $\sv$.

\begin{remark}[Functional forms]
  \label{rem:functional_forms}
  The specific forms of $r(y_t,k_t)$, the switch-to-parameter mappings
  $\mu,\rho,\nu,\eta,\Lam_0,\wv$, the legibility modifier $g(\cdot)$, the
  contested-mode penalty $\gamma$, and the resistance ratio
  $\beta(\mathrm{C})/\beta(\mathrm{P})$ are not fixed by the framework.
  They are objects of empirical calibration. The proof-of-concept
  (Section~\ref{sec:poc}) adopts the illustrative form
  $r(y_t,k_t)=y_t/(1+\alpha k_t)$ and notional lookup-table values solely
  to demonstrate tractability. These choices are not claims about the
  true functional forms; the equations originate in a conceptual proposal
  and are deliberately provisional.
\end{remark}

\subsection{Endogenous Mode Switching}
\label{sec:modeswitch}
The transition from permissive to contested is the analytical core of the
framework and the locus of the sender's design leverage. $B$ accumulates
an exposure stock driven by mechanism activity and scaled by legibility:
\begin{equation}
  E_{t+1} = E_t + \phi\bigl(\ell(\sv)\bigr)\,\bigl(m_t + \omega\, a_t\bigr),
  \label{eq:exposure}
\end{equation}
where $\phi(\ell)\geq 0$ is increasing in legibility (transparent
mechanisms are detected faster; deniable ones accrue exposure slowly),
and $\omega \geq 0$ weights the visibility of direct investment relative
to ambient momentum. Exposure drives a detection \emph{hazard}: the
per-period probability that $B$ attributes the mechanism and transitions
to the contested mode is
\begin{equation}
  h_t \;=\; 1 - \exp\!\bigl(-E_t/\tau\bigr),
  \label{eq:hazard}
\end{equation}
where $\tau>0$ is a detection scale (larger $\tau$ means slower
detection). Writing $\pi_t$ for the probability that the system has
entered the contested mode by period $t$, $\pi_t$ evolves as
$\pi_{t+1} = \pi_t + (1-\pi_t)\,h_t$ and is monotone non-decreasing, since
the transition is irreversible. The reported trajectories are the
resulting \emph{expected} trajectories: the effective routing and
resistance at each period interpolate between the permissive and contested
regimes in proportion to $\pi_t$,
\begin{equation}
  \Phimech^{\mathrm{eff}}_t = (1-\pi_t)\,\Phimech(\sv,\mathrm{P})
    + \pi_t\,\Phimech(\sv,\mathrm{C}),
  \qquad
  \beta^{\mathrm{eff}}_t = (1-\pi_t)\,\beta(\mathrm{P})
    + \pi_t\,\beta(\mathrm{C}).
  \label{eq:blend}
\end{equation}
This expected-trajectory formulation is the deterministic reduction of a
stochastic detection process; it keeps the simulation cheap while letting
dwell time vary smoothly with legibility. We report the period at which
$\pi_t$ first crosses $0.5$ as the mechanism's \emph{dwell time}.

Two properties of \eqref{eq:exposure}--\eqref{eq:hazard} deserve emphasis.
First, $E_t$ is non-decreasing and, for any mechanism with
$\phi(\ell) > 0$ under sustained investment, unbounded; the detection
hazard $h_t$ therefore approaches $1$ and the contested-mode probability
$\pi_t \to 1$; detection is not something a design can avoid, only defer.
The design question is \emph{when}, not \emph{whether}. Second, because 
$\phi$ depends on legibility and the drivers of $E_t$ include both momentum 
and investment intensity, the mechanism designer directly shapes the 
\emph{dwell time} in the permissive mode. This is the design lever the 
framework is built around: aggressive, high-visibility funding accumulates 
compliance quickly but also raises the contested-mode probability faster; a
deniable, slow-loading mechanism buys a longer permissive window at the
cost of slower accumulation.

\begin{remark}[Expected trajectory, deterministic guard, and full hazard]
  \label{rem:hazard}
  The model reports the \emph{expected} trajectory under the detection
  hazard \eqref{eq:hazard}: mode occupancy is tracked by the probability
  $\pi_t$ and the dynamics are blended via \eqref{eq:blend}, which is
  deterministic to compute and yields smooth dependence of dwell time on
  legibility. Two variants bound this choice. Setting the transition to
  fire the first time $E_t$ crosses a threshold recovers a
  \emph{deterministic guard} --- a hybrid system with a single switching
  surface \citep{van2007introduction} --- while sampling the transition time
  from the hazard and averaging over realizations gives the full
  \emph{stochastic} form, closer to the Markov-switching lineage of
  \citet{Hamilton1989}. The expected-trajectory reduction used here 
  approximates the mean of the stochastic form and avoids Monte-Carlo 
  simulation over switching times; we note the two variants as extensions.
\end{remark}

\subsection{Within-Mode Equilibrium}
\label{sec:equilibrium}
Because the mode switches at most once and irreversibly, the system
admits a clean two-phase analysis: a within-mode fixed point exists for
each mode, and the trajectory is a permissive-mode transient toward
$(y^*_\mathrm{P}, k^*, m^*)$ that, once the contested mode engages, is 
redirected toward the contested-mode fixed point 
$(y^*_\mathrm{C}, k^*, m^*)$.
Throughout this subsection we take investment to be constant at
$a_t = a^*$, as in the uniform-investment specification of
Section~\ref{sec:opt_problem}. Setting $\Delta y=\Delta k=\Delta m=0$
within a mode, and noting that
\eqref{eq:entrenchment}--\eqref{eq:momentum} are mode-invariant, the
stock and momentum fixed points are pinned by $a^*$:
\begin{align}
  k^* &= \frac{\mu(\sv)\,a^*}{\kappa},
  \label{eq:k_star}\\[4pt]
  m^* &= \frac{\bigl(\nu(\sv)\,\mu(\sv)/\kappa + \eta(\sv)\bigr)\,a^*}
             {1-\rho(\sv)},
  \label{eq:m_star}
\end{align}
with $\rho(\sv)<1$ required for $m^*$ finite. Define the mode-dependent
effective pressure scalar
\begin{equation}
  \Phimech(\sv,\zmode) \;=\; \wv(\sv)^\top \Lam(\sv,\zmode).
  \label{eq:phi}
\end{equation}
With $\pinertia = 1$, the within-mode equilibrium policy position
satisfies
\begin{equation}
  \Phimech(\sv,\zmode)\, m^*\, \delta \;=\; \beta(\zmode)\, r(y^*_\zmode, k^*).
  \label{eq:equil_condition}
\end{equation}
With $r(y,k)=y/(1+\alpha k)$, \eqref{eq:equil_condition} solves in closed
form within each mode:
\begin{equation}
  y^*_\zmode \;=\;
  \clip_{[0,1]}\!\left(
    \frac{\Phimech(\sv,\zmode)\, m^*\, \delta\, (1+\alpha k^*)}
         {\beta(\zmode)}
  \right).
  \label{eq:ystar}
\end{equation}
The clip reflects the $y\in[0,1]$ constraint and binds at both ends:
sufficiently high-influence combinations would saturate at full
compliance in the permissive mode, and mechanisms whose government-channel
reversal drives $\Phimech(\sv,\mathrm{C})$ negative are floored at zero in
the contested mode. Where either bound binds, it must be accounted for
when interpreting rankings. For $\pinertia < 1$ the stationarity condition is
$(1-\pinertia)\,y^*_\zmode = \Phimech(\sv,\zmode)\,m^*\,\delta -
\beta(\zmode)\,r(y^*_\zmode,k^*)$, and \eqref{eq:ystar} generalizes to
$y^*_\zmode = \clip_{[0,1]}\!\big(\Phimech(\sv,\zmode)\,m^*\,\delta /
[\,(1-\pinertia) + \beta(\zmode)/(1+\alpha k^*)\,]\big)$, recovering
\eqref{eq:ystar} as $\pinertia \to 1$; the comparative statics below are
unaffected in sign.

Equation~\eqref{eq:equil_condition} supports comparative statics. Because
$\Phimech$ and $\beta$ are mode-dependent, each statement below is a
within-mode claim; the framework's substantive predictions concern how
the permissive-mode transient and the contested-mode fixed point compare,
and how long the system dwells in each.

\begin{proposition}[Credibility]
  \label{prop:credibility}
  Within either mode, and for $\Phimech(\sv,\zmode) > 0$, equilibrium
  compliance $y^*_\zmode$ is weakly increasing in $A$'s credibility
  $\delta$.
\end{proposition}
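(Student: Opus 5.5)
The plan is to work directly from the within-mode equilibrium condition \eqref{eq:equil_condition}, first for a general resistance function $r$ and then for the closed form \eqref{eq:ystar}. The first step is to observe that $k^*$ and $m^*$ in \eqref{eq:k_star}--\eqref{eq:m_star} are pinned by $a^*$ and the switch-dependent rates alone. Neither depends on $\delta$, and neither does $\Phimech(\sv,\zmode)$ or $\beta(\zmode)$. Hence, within a fixed mode, $\delta$ enters the equilibrium only through the left-hand side $\Phimech(\sv,\zmode)\,m^*\,\delta$. Because $a^*\ge 0$, $\rho(\sv)<1$, and the rates are nonnegative, we have $m^*\ge 0$. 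Together with $\Phimech(\sv,\zmode)>0$, this makes the left-hand side weakly increasing in $\delta$, and strictly increasing when $a^*>0$.

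For general $r$, I would argue by the implicit function theorem on the interior of $[0,1]$. Write $F(y,\delta)=\beta(\zmode)\,r(y,k^*)-\Phimech(\sv,\zmode)\,m^*\,\delta$. Then $\partial F/\partial y=\beta(\zmode)\,\partial r/\partial y>0$ by the maintained assumption $\partial r/\partial y_t>0$ and $\beta(\zmode)>0$. Also $\partial F/\partial\delta=-\Phimech(\sv,\zmode)\,m^*\le 0$. Therefore
\[
\frac{dy^*_\zmode}{d\delta}=\frac{\Phimech(\sv,\zmode)\,m^*}{\beta(\zmode)\,\partial r/\partial y}\;\ge 0 .
\]
Equivalently, without differentiability one can note that $r(\cdot,k^*)$ is strictly increasing. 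Its preimage of a larger target value is therefore a larger $y$, giving monotonicity by a one-line order argument.

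The only real obstacle is the clamp. The equilibrium is $\clip_{[0,1]}$ of the interior solution, and I would handle it by noting that $\clip_{[0,1]}$ is a weakly increasing map. A composition of a weakly increasing map with a weakly increasing function of $\delta$ stays weakly increasing. This is exactly why the statement says ``weakly'': on regions where the upper bound binds, $y^*_\zmode=1$ is flat in $\delta$. In the specific form \eqref{eq:ystar} this is immediate, since the argument of the clip is $\delta$ times the nonnegative constant $\Phimech\,m^*(1+\alpha k^*)/\beta(\zmode)$.

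Finally, I would remark that the same argument covers $\pinertia<1$ using the generalized closed form, where the denominator $(1-\pinertia)+\beta(\zmode)/(1+\alpha k^*)$ is positive and independent of $\delta$. The hypothesis $\Phimech>0$ is needed because for $\Phimech<0$ (the contested mode with a strong reversal) the sign flips. In that case the interior solution is decreasing in $\delta$, although it is then floored at zero.
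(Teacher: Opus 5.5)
Your proposal is correct and follows essentially the same route as the paper's sketch. The left side of \eqref{eq:equil_condition} is linear and nondecreasing in $\delta$ when $\Phimech(\sv,\zmode)>0$, strict monotonicity of $r$ in $y$ forces $y^*_\zmode$ upward (and gives uniqueness), and the clip in \eqref{eq:ystar} is what makes the conclusion weak rather than strict. Your extra steps make explicit what the paper's sketch leaves implicit: that $k^*$, $m^*$, $\Phimech$ and $\beta$ do not depend on $\delta$, that $m^*\ge 0$ (which the paper uses tacitly), and that the result extends to $\pinertia<1$.
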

\begin{proof}[Proof sketch]
  The left-hand side of \eqref{eq:equil_condition} is linear and
  increasing in $\delta$ when $\Phimech > 0$; with
  $\partial r/\partial y>0$, the fixed-point condition forces
  $y^*_\zmode$ upward to restore balance. Uniqueness follows from
  monotonicity of $r$ in $y$ given the sign restrictions of
  Section~\ref{sec:dynamics}. The claim is weak rather than strict
  because of the clip in \eqref{eq:ystar}. When
  $\Phimech(\sv,\zmode) < 0$, as may obtain in the contested mode, the
  direction reverses: credibility amplifies counter-pressure.
\end{proof}

\begin{proposition}[Legibility and cost incidence]
  \label{prop:legibility}
  In the permissive mode, equilibrium compliance $y^*_\mathrm{P}$ is
  weakly decreasing in legibility (under the ordering of
  Section~\ref{sec:dynamics}), and the size of the legibility penalty
  is proportional to the salience $w^G(\sv)$ of the government channel.
  Consequently, mechanisms whose costs fall on $B$ (small $w^G$) are
  partially insulated from the legibility penalty; in the limiting case
  $w^G=0$, legibility does not affect $y^*_\mathrm{P}$.
\end{proposition}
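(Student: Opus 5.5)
The plan is to work directly from the closed form \eqref{eq:ystar} in the permissive mode, with $\pinertia=1$; the $\pinertia<1$ version only changes the positive denominator. The first step is to isolate where legibility enters. Under the baseline specification, $\ell(\sv)$ enters the fixed point only through the government routing component \eqref{eq:routing}. The exposure stock \eqref{eq:exposure} affects dwell time but not the within-mode fixed point. Likewise $k^*$ and $m^*$ in \eqref{eq:k_star}--\eqref{eq:m_star} depend on $\mu,\nu,\eta,\rho$ and not on legibility. To make this clean, I will treat legibility as varied holding the other switch components fixed, so that $\Lam_0(\sv)$, $\wv(\sv)$, $\mu,\rho,\nu,\eta$ are unchanged. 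Substituting \eqref{eq:routing} into \eqref{eq:phi} gives the decomposition
\[
\Phimech(\sv,\mathrm{P}) \;=\; \wv(\sv)^\top\Lam_0(\sv) \;+\; w^G(\sv)\,g\bigl(\ell(\sv)\bigr),
\]
because $\lambda^P$ and $\lambda^F$ are mode-invariant and legibility-free.

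The second step is monotonicity. Assuming salience weights are nonnegative, $w^G\ge 0$. Since $g$ is non-increasing along $\mathrm{Deniable}\prec\mathrm{Obscured}\prec\mathrm{Transparent}$, $\Phimech(\sv,\mathrm{P})$ is non-increasing in $\ell$. The unclipped argument of \eqref{eq:ystar} is $\Phimech\,m^*\delta(1+\alpha k^*)/\beta(\mathrm{P})$. Its multiplier $C:=m^*\delta(1+\alpha k^*)/\beta(\mathrm{P})$ is nonnegative, since $m^*\ge0$, $\delta\ge0$, $k^*\ge0$ and $\beta>0$. So the argument is non-increasing in $\ell$. Because $\clip_{[0,1]}$ is monotone, $y^*_\mathrm{P}$ is weakly decreasing. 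The clip is also the source of weakness, exactly as in Proposition~\ref{prop:credibility}.

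The third step is proportionality. For two legibility levels $\ell\succ\ell'$, the unclipped penalty is
\[
C\,w^G(\sv)\,\bigl[g(\ell')-g(\ell)\bigr],
\]
which is linear in $w^G$ with a coefficient independent of $w^G$. This linearity holds because $w^G$ multiplies only the $g$ term. The cost-incidence consequence follows from the calibration assumption in Section~\ref{sec:dynamics} that cost borne by $B$ makes $w^G$ small. In the limit $w^G=0$ the legibility term vanishes identically, so $y^*_\mathrm{P}$ is independent of $\ell$.

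The main obstacle is the bookkeeping behind "proportional". The cost-bearer switch changes the whole vector $\wv$, not just $w^G$, so varying cost incidence also shifts the baseline $\wv^\top\Lam_0$ and hence where the clip binds. I will therefore state proportionality for the unclipped argument, or equivalently for interior equilibria, and note that saturation at $0$ or $1$ can only shrink the realized penalty. That observation preserves the insulation conclusion. For general $r$ I would fall back on the argument used for Proposition~\ref{prop:credibility}: $\partial r/\partial y>0$ makes the solution of \eqref{eq:equil_condition} monotone in $\Phimech$. Proportionality there holds only to first order, via the implicit function theorem, with $\partial y^*/\partial\ell = w^G g'(\ell)\,m^*\delta/(\beta\,\partial r/\partial y)$.
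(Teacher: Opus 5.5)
Your proposal is correct and follows essentially the paper's argument: legibility enters $y^*_\mathrm{P}$ only through $\lambda^G$. The reduction in $\Phimech(\sv,\mathrm{P})$ is therefore $w^G(\sv)\,[g(\ell')-g(\ell)]$, and monotonicity of the fixed point in $\Phimech$ carries this to $y^*_\mathrm{P}$, whether argued via \eqref{eq:equil_condition} with $\partial r/\partial y>0$ as the paper does or via the closed form \eqref{eq:ystar} as you do. Your treatment is somewhat more careful than the paper's sketch on three points it leaves implicit: you assume explicitly that $\Lam_0,\wv,\mu,\rho,\nu,\eta$ do not move with legibility, you note that the clip can only shrink the realized penalty, and you observe that for nonlinear $r$ proportionality at the level of $y^*$ holds only to first order.
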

\begin{proof}[Proof sketch]
  A more legible mechanism lowers $\lambda^G$ via the non-positive
  modifier $g(\cdot)$ in \eqref{eq:routing}. Its effect on the effective
  pressure scalar is $\partial\Phimech/\partial\lambda^G = w^G$ from
  \eqref{eq:phi}, so the reduction in $\Phimech$ is $w^G\,|\Delta g|$ for
  a one-step increase in legibility. Substituting into
  \eqref{eq:equil_condition} and using $\partial r/\partial y>0$ gives
  the monotonicity claim, with magnitude scaled by $w^G$. When $w^G=0$
  the government channel carries no weight and legibility drops out of
  \eqref{eq:phi} entirely.
\end{proof}

\begin{remark}[A falsifiable structural claim]
  Proposition~\ref{prop:legibility} is not merely a technical
  observation. It asserts that the cost of being seen depends on who
  bears the mechanism's cost---a claim that could in principle be
  confronted with evidence on how detection affects mechanisms of
  differing cost incidence. That the $w^G=0$ case makes legibility
  costless is a sharp, testable prediction of the routing encoding rather
  than an artifact.
\end{remark}

\begin{proposition}[Dwell time and durability]
  \label{prop:durability}
  Fix the permissive-mode effective pressure $\Phimech(\sv,\mathrm{P})$.
  Then cumulative compliance is increasing in permissive-mode dwell time,
  which is in turn decreasing in legibility $\ell(\sv)$ and in investment
  visibility $\omega$. For mechanisms with $\rho(\sv)\to 1$, compliance
  accumulated during the permissive window persists into the contested
  mode, so durability is governed jointly by dwell time and momentum
  persistence.
\end{proposition}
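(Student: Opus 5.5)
I would prove Proposition~\ref{prop:durability} in three links, each a monotone-coupling argument on the deterministic expected-trajectory system \eqref{eq:policy}--\eqref{eq:blend}, with the investment path $a_t$ and all mode-invariant parameters held fixed. Cumulative compliance means $Y_T=\sum_{t\le T} y_t$ over the planning horizon.

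\textbf{Step 1: dwell time is decreasing in legibility and in $\omega$.} Since \eqref{eq:entrenchment}--\eqref{eq:momentum} are mode-invariant, the paths $k_t$ and $m_t$ do not depend on $\ell$ or $\omega$. By \eqref{eq:exposure}, $E_t=E_0+\phi(\ell)\sum_{s<t}(m_s+\omega a_s)$. This is pointwise non-decreasing in $\phi(\ell)$, hence in $\ell$, and in $\omega$, because $m_s,a_s\ge 0$. The hazard \eqref{eq:hazard} is increasing in $E_t$. The recursion $\pi_{t+1}=1-(1-\pi_t)(1-h_t)$ is increasing in both $\pi_t$ and $h_t$, so induction gives a pointwise ordering of $\pi_t$. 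Therefore the first crossing of $0.5$ occurs weakly earlier. This step is routine.

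\textbf{Step 2: cumulative compliance is increasing in dwell time.} Compare two exposure paths with $\pi_t\le\pi'_t$ for all $t$. Because $\Phimech(\sv,\mathrm{P})$ is fixed and $\gamma>0$, we have $\Phimech(\sv,\mathrm{C})=\Phimech(\sv,\mathrm{P})-w^G(\gamma+g(\ell))$. I would need $w^G>0$ and $\gamma+g(\ell)\ge 0$ here, and would state that as the standing interpretation of the sign-reversing penalty. With $\beta(\mathrm{C})\ge\beta(\mathrm{P})$, the blend \eqref{eq:blend} gives $\Phimech^{\mathrm{eff}}_t$ non-increasing in $\pi_t$ and $\beta^{\mathrm{eff}}_t$ non-decreasing in $\pi_t$.

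The policy map $F(y)=\clip_{[0,1]}\bigl(\pinertia y+\Phimech^{\mathrm{eff}}_t m_t\delta-\beta^{\mathrm{eff}}_t r(y,k_t)\bigr)$ is decreasing in $\pi_t$ for each $y$. It is non-decreasing in $y$ provided $\pinertia-\beta^{\mathrm{eff}}_t\,\partial r/\partial y\ge 0$. For $r=y/(1+\alpha k)$ this condition reads $\beta(\mathrm{C})\le\pinertia(1+\alpha k_t)$, a stability condition that I would impose explicitly. A standard comparison induction then yields $y_t\ge y'_t$ for all $t$, and summing gives $Y_T\ge Y'_T$.

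The main obstacle is this monotonicity of $F$ in $y$. Without it, overshooting dynamics could break the ordering. If I cannot assume the condition, my fallback is to argue only at the level of the two-phase fixed points of Section~\ref{sec:equilibrium}. There, \eqref{eq:ystar} gives $y^*_\mathrm{P}\ge y^*_\mathrm{C}$ directly, and a longer permissive window keeps the trajectory near the larger target for more periods.

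\textbf{Step 3: persistence as $\rho\to1$.} After the switch, $y$ evolves under $\Phimech(\sv,\mathrm{C})$. The carry-over of accumulated compliance is governed by the post-switch drift $\Phimech(\sv,\mathrm{C})m_t\delta-\beta(\mathrm{C})r(y_t,k_t)$. From \eqref{eq:m_star}, $m^*$ scales like $1/(1-\rho)$, and momentum decays at geometric rate $\rho$ after investment falls. So as $\rho\to1$, the momentum loaded during the permissive window stays high for a horizon of order $1/(1-\rho)$.

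I would then show the two parameters act jointly on the compliance carried into the contested mode. The level reached at the switch is increasing in dwell time by Step 2. The rate at which the post-switch fixed point \eqref{eq:ystar} and transient pull $y$ away from that level depends on $\rho$. This gives the joint-governance claim qualitatively. Only the limiting statement is claimed, not a sharp rate.
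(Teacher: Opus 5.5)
\textbf{Steps 1 and 2.} These follow the paper's own route (exposure $\to$ hazard $\to$ dwell time, then ``a longer permissive window integrates more compliance''), and they are a correct, more rigorous version of its sketch. The pointwise ordering of $\pi_t$ through the monotone recursion works, and so does the comparison induction on $y_t$. You also rightly surface hypotheses the paper leaves implicit: $w^G\ge 0$ and $\gamma+g(\ell)\ge 0$ (so that $\Phimech(\sv,\mathrm{C})\le\Phimech(\sv,\mathrm{P})$), and $\beta^{\mathrm{eff}}_t\,\partial r/\partial y\le\pinertia$. Treat the last as a real restriction, not a formality. Its sufficient form $\beta(\mathrm{C})\le\pinertia(1+\alpha k_t)$ fails at the PoC baseline ($\beta(\mathrm{C})=1.0$, $\pinertia=0.85$, $\alpha=1$) whenever $k_t<3/17$. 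Equivalently, the paper's condition $0<\pinertia-\beta<1$ holds for $\beta(\mathrm{P})$ but not for $\beta(\mathrm{C})$.

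\textbf{The gap is in Step 3.} Slowly decaying momentum preserves permissive-mode gains only if contested-mode pressure still points toward compliance, i.e.\ $\Phimech(\sv,\mathrm{C})>0$. Nothing in your argument secures this. Your own identity $\Phimech(\sv,\mathrm{C})=\Phimech(\sv,\mathrm{P})-w^G(\gamma+g(\ell))$ is negative whenever $w^G(\gamma+g(\ell))>\Phimech(\sv,\mathrm{P})$. The paper explicitly allows this case: see the flooring remark after the closed form for $y^*_\zmode$, and the reversal clause in the proof of Proposition~\ref{prop:credibility}.

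In that case, in the two-phase picture you use, $\rho\to1$ keeps $m_t$ large after the switch. The drift $\Phimech(\sv,\mathrm{C})\,m_t\,\delta-\beta(\mathrm{C})\,r(y_t,k_t)$ is then more negative, so accumulated compliance is erased faster, not retained. At $\Phimech(\sv,\mathrm{C})=0$, $\rho$ drops out of the post-switch $y$-dynamics altogether. Saying the pull-away rate ``depends on $\rho$'' never fixes its sign, so the limiting claim is not established.

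The paper's sketch (``momentum, and hence pressure, decays slowly \ldots{} so permissive-mode gains are retained'') rests on the same unstated assumption. You should make it an explicit hypothesis. This is exactly where the small-$w^G$ insulation of Proposition~\ref{prop:legibility} enters, and it is why the PoC's top cluster pairs High persistence with cost bearer $B$.

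With $\Phimech(\sv,\mathrm{P}),\Phimech(\sv,\mathrm{C})\ge 0$ you can run Step 3 as a comparison too: $m_t$ is pointwise non-decreasing in $\rho$, and your map $F$ is non-decreasing in $m_t$. But a larger $\rho$ also raises $E_t$ through $m_t$, so by your Step 1 it shortens dwell time. Dwell time and $\rho$ are therefore coupled rather than independent levers. A monotone conclusion in $\rho$ needs the forcing gain to outweigh the earlier switch, and the word ``jointly'' in the statement has to absorb that coupling.
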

\begin{proof}[Proof sketch]
  From \eqref{eq:exposure}--\eqref{eq:hazard}, the exposure stock rises
  more slowly when $\phi(\ell)$ is small (deniable) and when the drivers
  $m_t+\omega a_t$ are small, so the detection hazard stays low longer;
  thus dwell time falls in legibility and in
  investment visibility. Compliance accrues in the permissive mode at a
  rate set by $\Phimech(\sv,\mathrm{P})$; holding that rate fixed, a
  longer window integrates more compliance before the contested mode
  engages. As
  $\rho\to1$, momentum, and hence pressure, decays slowly after the
  transition, so permissive-mode gains are retained. The conditioning on
  $\Phimech(\sv,\mathrm{P})$ is essential: dwell time is not valuable in
  itself, and a mechanism that loads no pressure gains nothing from a
  long permissive window. A full statement requires the within-mode
  transient, given in closed form for the PoC specification.
\end{proof}

Propositions~\ref{prop:credibility}--\ref{prop:durability} together
express the framework's central tension: aggressive designs accumulate
compliance quickly but raise the detection hazard sooner, while deniable,
slow-loading designs preserve the permissive window. This trade-off is
what generates a non-trivial cost-effectiveness frontier and what makes
mechanism selection genuinely scenario-dependent.

\subsection{Optimization Problem}
\label{sec:opt_problem}
Given a mechanism design $\sv$ and budget $\bar{C}$, $A$ selects an
action path $\{a_t\}$ to maximize cumulative compliance over the horizon:
\begin{equation}
  \max_{\sv,\,\{a_t\}} \; \sum_{t=0}^{T} y_t
  \quad\text{s.t.}\quad \sum_{t=0}^{T} a_t \leq \bar{C},\quad a_t \geq 0,
  \label{eq:optproblem}
\end{equation}
where the trajectory $\{y_t\}$ is generated by the mode-switching
dynamics \eqref{eq:policy}--\eqref{eq:hazard}. The objective integrates
compliance across both modes; because the action path affects both the
rate of compliance accumulation and the timing of the mode switch through
\eqref{eq:exposure}, \eqref{eq:optproblem} is not separable across time in
general. The proof-of-concept restricts attention to uniform investment
$a_t = \bar{C}/T$ and optimizes over $\sv$ only. Joint optimization over
$\{a_t\}$, in particular, front-loading versus spreading investment to
manage exposure, is a natural extension.

\section{Cost-Effectiveness Metrics}
\label{sec:assessment}
The optimization structure supports cost-effectiveness comparison across
mechanisms on a common footing. Three metrics follow from the model:
\begin{itemize}
  \item \textbf{Compliance efficiency:} cumulative compliance per unit of
        total investment, $\bigl(\sum_t y_t\bigr)/\bar{C}$. Because the
        proof-of-concept fixes total investment at $\bar{C}$ with uniform
        allocation, this is a constant rescaling of cumulative compliance;
        the two are not independent axes.
  \item \textbf{Time to threshold:} the number of periods to reach a
        specified compliance level $\bar{y}$, capturing how quickly a
        mechanism loads and, under the mode-switching dynamics, whether
        it reaches $\bar{y}$ before the contested mode engages. The metric 
        is right-censored at $T$: mechanisms that never reach $\bar{y}$
        within the horizon are recorded as censored rather than assigned
        a numerical time, and are excluded from comparisons on this
        metric.
  \item \textbf{Durability ratio:} compliance maintained after $A$
        withdraws investment relative to peak compliance, operationalized
        in the PoC by zeroing investment at $T/2$ and measuring residual
        compliance at $T$. It is governed jointly by $\rho(\sv)$, $k^*$,
        and whether the system has entered the contested mode.
\end{itemize}
The durability ratio is particularly important for policy evaluation.
Mechanisms with high upfront cost but high durability (e.g., standards
entrenchment) may dominate lower-cost mechanisms requiring continuous
investment over long horizons. The mode-switching structure sharpens this:
a mechanism that accumulates durable compliance \emph{before} detection
can retain it after the government mobilizes, whereas one that relies on
ongoing permissive-mode transmission collapses once the sign reverses.
For applied use, mechanisms are displayed on a cost-effectiveness Pareto
frontier over compliance efficiency and durability, allowing policymakers 
to compare instruments on cost-effectiveness and durability without 
engaging the underlying mathematics.

\section{Mechanism Illustrations}
\label{sec:examples}
We illustrate the framework with two stylized mechanisms that differ
sharply in their switch-vector configurations and consequent dynamics.
These are analytic illustrations, not empirical case studies, of how the
same formal structure generates qualitatively different predictions---and,
in particular, how the mode-switching machinery distinguishes mechanisms
that a single-mode model would rank identically.

Each illustration below highlights the switch dimensions most salient to the
mechanism rather than enumerating all of Table~\ref{tab:switches}; dimensions
left unstated take non-distinctive or default values and are omitted for
readability. The proof-of-concept implementation (Section~\ref{sec:poc})
operationalizes a six-dimension subset of the switch vector; accordingly, 
the switch tuples in Table~\ref{tab:mech_scores} report those six 
coordinates.

\subsection{Mechanism I: An Emigration Option Contract}
\label{sec:mech1}
\begin{example}[Emigration contract]
Country $A$ offers citizens of Country $B$ a guaranteed immigration
pathway valid for a defined window, at zero cost to exercise. No mass
emigration need occur for the mechanism to exert influence.
\end{example}

\noindent The switch vector is characterized as follows: \emph{influence
timing} latent-first (the unexercised option does the work);
\emph{target population} networked (the mobility-sensitive professional
stratum); \emph{propagation pathway}
cascade (elite behavior shapes institutional preferences broadly);
\emph{behavioral channel} divestment plus alignment-seeking;
\emph{momentum persistence} low (the option expires);
\emph{locus of pressure} civil society ($p_t^P$ primary); \emph{cost
bearer} $B$ (lost human capital); \emph{legibility} obscured (appears as
a standard immigration program). In the model this produces low
$\mu(\sv)$, moderate direct injection $\eta(\sv)$, and low $\rho(\sv)$,
with pressure loading primarily on $p_t^P$.

The mode-switching structure gives this mechanism a distinctive profile.
Because its cost falls on $B$, the government channel carries little of the
mechanism's weight ($w^G$ small), so Proposition~\ref{prop:legibility}
implies it is relatively insulated from legibility penalties: the
contested-mode reversal acts on a channel that barely enters its influence. 
Its latent option also loads momentum steadily through $\eta$, so 
compliance accumulates during the permissive window and its structural 
stock retains a modest fraction of that compliance after detection.

\subsection{Mechanism II: A Conditional Liquidity Corridor}
\label{sec:mech2}
\begin{example}[Conditional liquidity corridor]
Country $A$ establishes credit facilities for $B$'s private banking
sector whose cost varies continuously with $B$'s regulatory alignment, as
measured by a published policy index.
\end{example}

\noindent The switch vector differs sharply: \emph{momentum persistence}
low (the mechanism requires ongoing investment; if $A$ withdraws the
facility it collapses); \emph{locus of pressure} shifts to $p_t^F$ ($B$'s
firms bear the cost gradient and lobby government directly);
\emph{behavioral channel} alignment-seeking; \emph{legibility} obscured
(the index is public but the causal chain from index to credit cost to
lobbying is not); \emph{cost bearer} shared ($A$ sustains the facility 
while $B$'s firms bear compliance costs). Structural stock $k_t$ accumulates 
as $B$'s firms build compliance infrastructure, but since it accrues to 
$B$'s private sector rather than to $A$, $A$ must sustain $a_t$ to 
maintain influence.

The mode-switching machinery makes the contrast precise. Both mechanisms
are obscured, so by \eqref{eq:exposure} they share the same exposure rate
$\phi(\ell)$; what differs is how fast each drives exposure. The
emigration contract loads momentum through its latent option (higher
$\eta$), so its exposure stock $E_t$ accrues faster and it reaches the
contested mode \emph{sooner} (dwell $13$ versus $15$). Yet it still
dominates: during its shorter permissive window it accumulates more
compliance, and its structural stock retains more of that compliance
after detection than the liquidity corridor, whose low persistence
($\rho$) lets permissive-mode gains decay once the government channel
reverses. The two-phase model thus separates the mechanisms in a way a
single-mode model cannot---not through a speed-versus-durability
trade-off between them, but by showing that a faster-loading mechanism can
be detected earlier and remain more effective throughout. The illustration 
is therefore one of dominance with an instructive mechanism, not of a 
frontier: neither budget horizon nor strategic patience reverses the 
ordering. The two mechanisms illustrate the model's dynamics rather than 
a design frontier.

\begin{figure}[ht]
  \centering
  \includegraphics[width=0.82\textwidth]{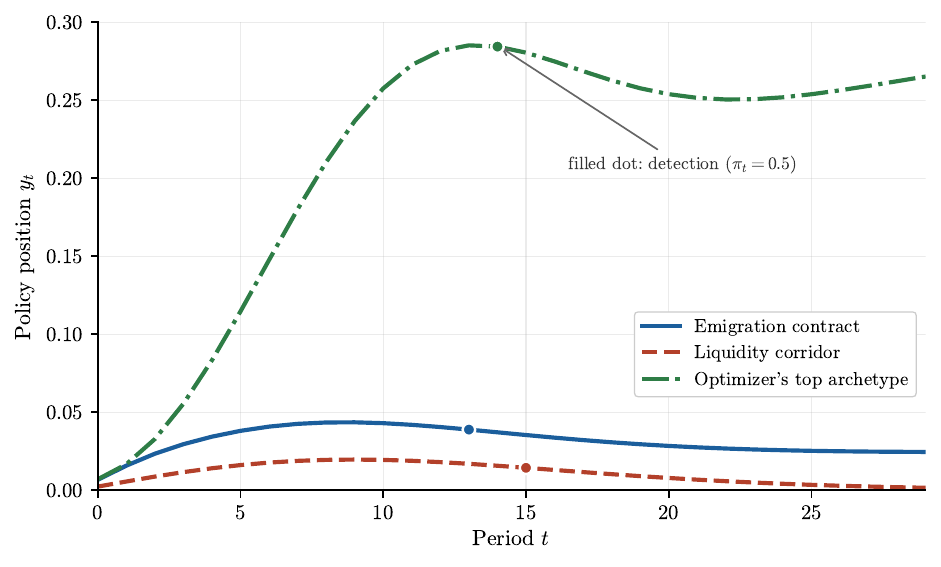}
\caption{Policy-position trajectories $y_t$ over the $T=30$ horizon for
    the two illustrative mechanisms of Section~\ref{sec:examples} and the
    optimizer's top-ranked archetype. Filled dots mark each mechanism's
    dwell time: the period at which the contested-mode probability $\pi_t$
    first crosses $0.5$. Because $\pi_t$ rises continuously from the first
    period, the government-channel reversal is blended in gradually via
    \eqref{eq:blend} rather than arriving as a discrete event, and each
    trajectory turns over once rising resistance
    $\beta^{\mathrm{eff}}_t\,r(y_t,k_t)$ overtakes the forcing term
    $\Phimech^{\mathrm{eff}}_t\,m_t\,\delta$. For the two low-persistence
    mechanisms this occurs several periods ahead of the dot; for the
    high-persistence archetype, whose momentum keeps the forcing term
    growing, it roughly coincides with it. The aftermaths then separate on
    persistence and cost incidence: the liquidity corridor decays to near
    zero; the emigration contract retains roughly half its peak, since its
    cost falls on $B$ and the reversing government channel therefore carries
    little of its weight ($w^G$ small, cf.\
    Proposition~\ref{prop:legibility}); and the top archetype partially
    recovers as accumulated momentum reasserts pressure. The two
    illustrative mechanisms are hand-specified rather than optimizer 
    selections. Under the illustrative lookup values of
    Section~\ref{sec:poc} the whole archetype space sits below the
    compliance ceiling; the vertical scale is set accordingly.}
  \label{fig:trajectories}
\end{figure}

\begin{table}[ht]
  \centering
  \caption{Illustrative Mechanisms Located in the PoC Switch Space}
  \label{tab:mech_scores}
  \renewcommand{\arraystretch}{1.3}
  \begin{tabular}{L{2.7cm} L{4.4cm} L{1.6cm} L{1.9cm} L{1.2cm} L{1.1cm}}
    \toprule
    \textbf{Mechanism} & \textbf{Switch vector (6-dim.)}
      & \textbf{Cumul. compl.} & \textbf{Durability}
      & \textbf{Dwell} & \textbf{Rank} \\
    \midrule
    Emigration contract
      & Latent / Civil society / Cascade / Low / Obscured / B
      & 0.941 & 0.077 & 13 & 287 \\
    Liquidity corridor
      & Active / Private sector / Direct / Low / Obscured / Shared
      & 0.325 & 0.019 & 15 & 638 \\
    \bottomrule
  \end{tabular}
  \smallskip\\
  {\small\textit{Note.} The two mechanisms of Section~\ref{sec:examples}
    are hand-specified narrative archetypes, not optimizer selections;
    their neither near-optimal nor near-worst ranks (of $972$) reflect that 
    they illustrate the framework's expressiveness rather than maximize 
    compliance. Both carry \emph{Low} momentum persistence, so both shed most 
    compliance after investment withdrawal (low durability), consistent with 
    the Section~\ref{sec:examples} narrative. \emph{Dwell} is permissive-mode
    dwell time in periods. Values are illustrative, from the
    uncalibrated mode-switching parameterization. Metrics are defined in
    Section~\ref{sec:assessment}. The switch vector is in the order 
    influence timing / locus of pressure / propagation pathway / momentum 
    persistence / legibility / cost bearer.}
\end{table}

\section{Proof-of-Concept Simulation}
\label{sec:poc}
To demonstrate the pipeline's tractability, we implemented a reduced-form
version of the framework as a local simulation tool. The proof-of-concept
(PoC) uses a six-switch vector (\emph{influence timing}, \emph{locus of 
pressure}, \emph{propagation pathway}, \emph{momentum persistence},
\emph{legibility to $B$}, \emph{cost bearer}) with three to four candidate
values per switch, yielding $3^5 \times 4 = 972$ enumerable
combinations.\footnote{Five ternary switches and one quaternary switch
(locus of pressure).} The synthetic testbed is domain-agnostic:
it removes any real-world calibration burden so that the scientific claim
concerns the pipeline's properties, not the prediction of historical
outcomes.

\paragraph{Design.} The PoC evaluates the mode-switching model of
Section~\ref{sec:model} through three studies that share one simulator
and vary different inputs. Table~\ref{tab:studies} summarizes the design.

\begin{table}[ht]
  \centering
  \caption{Proof-of-Concept: Three-Study Design}
  \label{tab:studies}
  \renewcommand{\arraystretch}{1.3}
  \begin{tabular}{L{2.6cm} L{4.3cm} L{4.6cm}}
    \toprule
    \textbf{Study} & \textbf{What varies} & \textbf{What is held fixed} \\
    \midrule
    Scenario sweep
      & Scenario parameters $\alpha,\beta,\delta,\kappa$, perturbed one
        at a time about baseline
      & Lookup tables; initial conditions at baseline \\
    Lookup robustness
      & Switch-to-parameter lookup values, perturbed within ordered tiers
        ($100$ draws)
      & Scenario parameters at selected configurations \\
    Initial-condition map
      & Initial policy $y_0$ and entrenchment $k_0$ over a $7\times7$ grid
      & Scenario parameters and lookup tables at baseline \\
    \bottomrule
  \end{tabular}
  \smallskip\\
  {\small\textit{Note.} All three studies re-run the full optimizer
    ($972$ combinations) at each grid point or draw. The scenario sweep
    is reported here in one-at-a-time form; a full $625$-cell factorial
    grid over the four scenario parameters is implemented and available
    in the accompanying code. Total wall-clock time for the three studies
    is approximately $10$ seconds using an AMD Ryzen AI 9 HX 370 machine 
    running Windows 11.}
\end{table}

\paragraph{Illustrative functional forms.} The simulation adopts
$r(y_t,k_t)=y_t/(1+\alpha k_t)$, which satisfies the sign restrictions of
Section~\ref{sec:dynamics} and admits a clean within-mode closed form. It
is chosen for tractability, not empirical grounding; the true form is an
open question.

\paragraph{Notional parameter mappings.} The switch-to-parameter lookup
tables assign numerical values to $\rho,\mu,\eta,\nu,\wv$, and $\Lam_0$ 
from qualitative reasoning about the switch dimensions, together with the
mode-switching quantities: the legibility modifier $g(\cdot)$ and
exposure rate $\phi(\cdot)$, the contested-mode reversal $\gamma$, and the
resistance ratio $\beta(\mathrm{C})/\beta(\mathrm{P})$. These values are
illustrative---intended to demonstrate that the pipeline distinguishes
mechanisms and produces a sensible ranking, not to represent calibrated
estimates.

\begin{table}[ht]
  \centering
  \caption{Baseline parameter values for the proof-of-concept. These are
    illustrative choices to demonstrate tractability, not calibrated
    estimates (Remark~\ref{rem:functional_forms}).}
  \label{tab:baseline_params}
  \renewcommand{\arraystretch}{1.2}
  \begin{tabular}{L{5.5cm} l l}
    \toprule
    \textbf{Parameter} & \textbf{Symbol} & \textbf{Value} \\
    \midrule
    Planning horizon            & $T$        & $30$ \\
    Budget                      & $\bar{C}$  & $1.0$ \\
    Resistance curvature        & $\alpha$   & $1.0$ \\
    Baseline resistance (permissive) & $\beta(\mathrm{P})$ & $0.5$ \\
    Contested resistance ratio  & $\beta(\mathrm{C})/\beta(\mathrm{P})$ & $2.0$ \\
    Policy inertia              & $\chi$     & $0.85$ \\
    Credibility                 & $\delta$   & $0.8$ \\
    Entrenchment decay          & $\kappa$   & $0.1$ \\
    Detection scale             & $\tau$     & $5.0$ \\
    Investment visibility       & $\omega$   & $1.0$ \\
    Threshold (time-to-threshold) & $\bar{y}$ & $0.5$ \\
    \bottomrule
  \end{tabular}
\end{table}

\paragraph{Results.}

The PoC computes, for each of the $972$ archetypes, equilibrium and
cumulative compliance, time to threshold, durability ratio, and
permissive-mode dwell time, and ranks archetypes accordingly. Its purpose
is to establish that the inverted regime-switching model of
Section~\ref{sec:model} is computable end to end, not to draw substantive
conclusions about which mechanisms are preferable. Two observations
suffice to establish this.

First, the mechanism space is \emph{differentiated}: archetypes vary
materially in their scored metrics rather than collapsing to a common
value, so the scoring resolves distinctions among designs. The two
quantities that exist only because the model switches modes both take a
range of values across the space. The durability ratio takes $577$
distinct values spanning $[0.000, 0.477]$, and permissive-mode dwell time
takes $24$ distinct values spanning $7$ to $30$ periods, where the upper
value coincides with the horizon $T=30$ and denotes mechanisms not yet
detected within it (right-censored, as with time to threshold). The
mode-switching dynamics are therefore exercised rather than dormant.
Moreover, a Pareto frontier over compliance efficiency and durability
retains $9$ of the $972$ archetypes, indicating that the two metrics are
not redundant: the space contains genuine trade-offs rather than a single
dominant design.

Second, the pipeline responds to its inputs in a structured way. Varying
scenario parameters, lookup values, and initial conditions produces
systematic variation in the ranked output rather than noise, which is
what the three sweeps of Table~\ref{tab:studies} demonstrate. We report
these sweeps as evidence that the machinery is exercisable across its
input space, not as findings about geoeconomic influence; their specific
rankings are artifacts of an illustrative parameterization and carry no
substantive weight.

The PoC does \emph{not} establish that any particular mechanism archetype
is optimal, nor that the mechanism space partitions into interpretable
regimes; whether such structure exists is a question for a calibrated
model, not this demonstration. Regularities that do appear in the sweeps
are analytic consequences of
Propositions~\ref{prop:credibility}--\ref{prop:durability} rather than
empirical discoveries, and we treat them as confirmation that the
implementation is faithful to the model. The clearest instance is that
the top-ranked cluster is constant on two dimensions, taking
\emph{momentum persistence} $=$ High and \emph{cost bearer} $=$ $B$
throughout. This is exactly what
Propositions~\ref{prop:legibility}--\ref{prop:durability} predict---high
persistence retains permissive-mode gains, and cost incidence on $B$
lowers $w^G$ and so insulates against the legibility penalty---and we
report it as an implementation check, not a result.

\paragraph{Implementation notes.} Three properties of the demonstration
should be read carefully. Under the illustrative lookup values the entire
archetype space sits well below the compliance ceiling: the highest peak
policy position of any of the $972$ archetypes is approximately $0.29$,
so the $y^*=1$ clip in \eqref{eq:ystar} never binds and every mechanism
remains in a low-compliance regime. This reflects the provisional
parameterization, not a property of the framework, and it does not affect
the proof-of-concept's claims, which are \emph{ordinal} (differentiation,
ranking, and the cost-effectiveness frontier) rather than statements
about absolute compliance levels. Differentiation among archetypes
accordingly rests on transient speed, dwell time, and durability. The 
lookup-robustness sweep perturbs values \emph{within} ordered tiers, so it 
exercises the pipeline against magnitude variation but holds the ordinal 
structure of the qualitative assignments fixed; the ordinal assumptions 
themselves are a calibration question, not something the PoC probes. And 
the policy update \eqref{eq:policy} carries the inertia term $\pinertia$, 
whose interaction with baseline resistance $\beta$ must satisfy a mild 
stability condition for initial conditions to influence trajectories: 
writing the policy self-map at low entrenchment ($r\approx y$) as
$y_{t+1}\approx(\pinertia-\beta)\,y_t+\text{(forcing)}$, genuine and
stable dependence on initial conditions requires
$0<\pinertia-\beta<1$. Below the lower bound the update suppresses the
initial position rather than propagating it; above the upper bound it
fails to contract. The baseline parameterization ($\pinertia=0.85$,
$\beta=0.5$) gives $\pinertia-\beta=0.35$. This is a property of the
discrete update, noted here for reproducibility.

\paragraph{Transition to empirical calibration.} The PoC demonstrates
that the pipeline architecture is computationally tractable and produces
discriminating output under illustrative parameterization. Transition to
a calibrated model requires: (i) empirical estimation or expert
elicitation of the switch-to-parameter mappings, \emph{including} the new
mode-switching quantities ($g,\phi,\gamma,\tau$, and the resistance
ratio); (ii) selection of functional forms for $r$, the mode-transition
hazard, and the mapping functions, guided by case evidence; and (iii)
validation of predictions, especially dwell-time and durability
predictions, against historical episodes of geoeconomic influence.

\section{Discussion}
\label{sec:discussion}
The framework's contribution is to take seriously the internal
transmission structure through which indirect influence operates, and to
model the moment that structure changes --- when the target detects the
mechanism and its government channel reverses. We discuss scope,
generalizability, the analytic-versus-emergent character of the results,
and the framework's ethical posture.

\subsection{Scope Conditions}
\label{sec:scope}
We have positioned the IGI approach as a means of geopolitical influence.
The same architecture may apply to the discovery of policies for internal
deployment by a nation, a firm, or another large-scale organization, and
possibly as a tool for designing policies in computer-science settings.
As specified, however, it is best suited to target nations with
meaningful internal political economies. Targets without sufficiently
distinct and autonomous civil societies, private sectors, and government
factions cannot route the pressure a mechanism creates.

The empirical sanctions literature supports this boundary directly.
\citet{lektzian2007institutional} show that sanctions are systematically
less effective against nondemocratic targets, arguing that the key to
success is generating political costs for the regime's winning coalition;
in autocracies that coalition is small, rents accrue to insiders, and the
internal transmission channels are weak or absent.
The IGI framework formalizes this: in the limiting case where
$B$'s civil society and private sector exert no independent influence, the
pressure vector collapses toward the government subsystem, and, because
that channel is precisely the one that reverses in the contested
mode, external mechanisms lose their primary lever the moment they are
detected. The unitary-actor assumption of the formal sanctions literature
\citep{Drezner1999} is more than an abstraction; it is a
reasonable approximation for a specific and important class of targets.
The framework's contribution is to make explicit when that approximation
holds and when it breaks down. Case evidence is consistent:
\citet{rochat2026effectiveness} document that targeted sanctions
contributed to reform in Myanmar, where internal channels were
sufficiently autonomous to transmit pressure, but failed in Zimbabwe,
where the regime's entrenchment blocked those channels.

\subsection{Generalizability}
\label{sec:general}
The underlying logic (engineering an incentive environment so that a
target system's own actors generate the desired pressure) extends beyond
the interstate case. Two extensions are noteworthy. \emph{Domestic
supply-chain resilience}: inducing private-sector firms to internalize
national-security considerations in sourcing without direct mandates is
structurally identical to the foreign-policy case. The locus of pressure
shifts to the private-sector subsystem, the behavioral channel becomes
alignment-seeking, and the designer is a domestic government rather than a
foreign state. The switch-vector approach accommodates this without
structural modification, and the mode-switching structure applies too:
firms and regulators ``detecting'' and resisting an incentive scheme is
the same detection-triggered transition. \emph{Conditional development 
finance}: IMF structural conditionality instantiates the liquidity-corridor
mechanism. It is an external actor offering financial access whose terms vary
with policy compliance, routing pressure through domestic financial
institutions that lobby the government. The conditionality literature
documents precisely the internal transmission dynamics the model predicts
\citep{mayer2005political,dreher2009imf}.

A third connection is historical: \citet{schoppa1997bargaining} documents
that synergistic strategies --- deliberately activating aligned domestic
coalitions --- deployed in U.S.--Japan trade negotiations had measurable
effect, and that their success tracked the availability of those internal
coalitions, consistent with the IGI framework's pressure-routing
predictions.

\subsection{Analytic versus Emergent Results}
\label{sec:analytic-emergent}
A methodological point deserves emphasis for readers evaluating the
simulation. Some of the framework's regularities are \emph{analytic}: they
follow directly from the equilibrium conditions and comparative statics of
Section~\ref{sec:model}. That compliance rises with credibility
(Proposition~\ref{prop:credibility}) and that deniable, high-persistence
mechanisms score well in the permissive mode are corollaries of the
model's monotone structure, not discoveries of the simulation; a
simulation that ``found'' them would merely be confirming the algebra, and
we treat their appearance in the sweeps as a check that the implementation
is faithful. Other questions are \emph{empirical} and cannot be read off
the equilibrium conditions, including which archetype is preferable as a
function of initial conditions, and whether the permissive/contested trade-off
makes selection depend on scenario rather than being dominated by a single
archetype. The proof-of-concept does not answer these. Answering them
requires a calibrated model, and under the illustrative parameterization
no such selection structure should be inferred. What the distinction buys,
even ahead of calibration, is a map of where simulation would be
load-bearing (selection, dwell time, the frontier) versus redundant with
theory (monotone comparative statics).

\subsection{Sensitivity to Parameterization}
\label{sec:sensitivity}
The ranking of archetypes depends on the switch-to-parameter mappings,
which in the PoC are illustrative rather than calibrated
(Remark~\ref{rem:functional_forms}). This sensitivity is a feature.
Varying parameter values around baseline identifies which parameters most
influence the ranking, and therefore which mappings most urgently
require calibration. This is analogous to prior-sensitivity analysis in Bayesian
models. The framework is designed to make this diagnostic transparent: as
case evidence, natural experiments, or expert elicitation constrain the
parameter space, the archetype rankings update without altering the
model's structure. Under the mode-switching model the same logic extends
to the transition quantities: the detection threshold $\tau$ and exposure
rate $\phi$ are prime elicitation targets precisely because dwell time is
where the interesting trade-offs live.

\subsection{The Archetype-to-Instrument Gap}
\label{sec:gap}
The pipeline produces abstract archetypes, or structured descriptions of how
influence is timed, routed, and sustained. It does not produce concrete instruments.
Translating an archetype into an actionable instrument requires contextual
judgment about the target's institutions, culture, and political economy.
The optional language-model generation stage (Figure~\ref{fig:pipeline})
assists this translation but introduces a validity question: does the
generated description faithfully reflect the archetype's structural
properties, including its mode-switching profile, or does it drift?
Wargaming and structured expert review are the natural validation
mechanisms. Participants can probe whether a candidate instrument would
route pressure through the specified channels and whether its
credibility, persistence, and, crucially, detectability properties
survive contact with real institutional constraints. Full validation is
beyond the present scope, but is a necessary condition for operational
use; Section~\ref{sec:llm-stage} reports a first, deliberately narrow
step toward it.

\subsection{Instantiating the Generation Stage: A Proof-of-Concept Run}
\label{sec:llm-stage}

The pipeline (Figure~\ref{fig:pipeline}) posits a final stage translating an
abstract archetype into a concrete candidate instrument. We ran that stage once,
as a proof of concept, on the three mechanisms of Section~\ref{sec:examples}: the
two hand-specified illustrations and the optimizer's top-ranked archetype. The
exercise is small---nine generations from a single prompt and a single
model---and we report it as a demonstration that the stage runs and can be
audited, not as a validated capability.

\paragraph{Procedure.} A fixed prompt (Appendix~\ref{app:prompt}) supplied only
the six switch values and the synthetic Civilization~A/B setting, and requested a
concrete instrument as structured output. Three generations were produced per
archetype. We then audited faithfulness two ways. In the \emph{self-report}
check, the generating model also stated which switch values its own instrument
reflected. In the \emph{blind} check, a second pass classified each narrative
into a switch vector after the source archetype and the self-report had been
stripped and the nine narratives shuffled; recovered vectors were then compared
to the true archetypes. Both passes used Claude Sonnet~5 (22 July 2026), each
run in a fresh incognito session so that no conversational history, memory, or
personalization carried between generation and classification.

\paragraph{Self-report is uninformative.} Self-reported vectors matched the
requested archetype on all $54$ dimensions ($100\%$). This result carries
almost no evidential weight: because the generating model saw the archetype, a
match is equally consistent with faithful instantiation and with copying the
input back. We report it only to document that the obvious audit does not work,
and that a blind procedure is necessary.

\paragraph{The blind check is informative.} Blind classification recovered
$41/54$ switch dimensions ($76\%$; Table~\ref{tab:faithfulness}). Failures were
systematic rather than scattered, and two of them were unanimous across all
three generations of an archetype. They admit three distinct diagnoses.

\emph{Generation drift.} One instrument (top archetype, generation~3) made
continued benefits explicitly conditional on B's policy, a visible demand from
A, which the blind pass read as \emph{Obscured}/\emph{Mixed} rather than the
requested \emph{Deniable}/\emph{Latent}. Its recorded rationale identifies the
departure precisely: years of dormant, unexercised protocol lock-in do the
structural work, but A later issues an explicit active conditional signal to
trigger the payoff. The instrument departed from its specification, the
classification recovered the departure from the narrative alone, and the
generator's self-report did not register it.

\emph{Systematic drift under joint constraint.} All three liquidity-corridor
instruments were classified \emph{Cascade} rather than the requested
\emph{Direct}, and the classifier's stated reasons track the text: each
narrative turns on competitive spread between adopting firms and holdouts. One
reading is that this switch value is difficult to instantiate jointly with the
rest of the vector (an instrument in which the private sector pressures
government without propagating among firms is hard to write), in which case the
drift is informative about the mechanism space rather than about the model.

\emph{Taxonomy ambiguity.} Eight of the thirteen misses fell on two dimensions
whose definitions did not survive contact with a classification task. Four fell
on \emph{cost bearer}, where the classifier's rationales reason uniformly from
who funds the instrument---A's ministry subsidizing credit, A financing
scholarships and intermediaries---while the archetypes assign the value by who
absorbs the consequences. Table~\ref{tab:switches} conflates the two, and both
readings are textually supported.

The remaining four fell on \emph{legibility}, where the difficulty is more
instructive. Table~\ref{tab:switches} defines legibility by how readily B
identifies and attributes a mechanism. That definition proved insufficient to resolve
cases blind; the classification rubric therefore added an operational criterion
absent from the taxonomy, that an explicit conditional demand from A counts
against deniability (Appendix~\ref{app:prompt.blind}). That criterion then drove
the calls: instruments voicing no conditional demand were read as
\emph{Deniable}, including all three emigration instruments specified as
\emph{Obscured}. These misses are therefore partly an artifact of
operationalization rather than evidence about the narratives. This is itself
informative: a dimension that cannot be applied without inventing a criterion is
a dimension that does not yet decide cases. Both dimensions require revision
before elicitation.

\begin{table}[ht]
  \centering
  \caption{Archetype-to-instrument round-trip recovery (blind pass)}
  \label{tab:faithfulness}
  \renewcommand{\arraystretch}{1.15}
  \begin{tabular}{L{3.5cm} c c L{4.6cm}}
    \toprule
    \textbf{Archetype} & \textbf{Gen.} & \textbf{Blind match}
      & \textbf{Dimensions not recovered} \\
    \midrule
    Emigration contract & 3 & 13/18 & legibility ($3$), locus of pressure ($1$), cost bearer ($1$) \\
    Liquidity corridor  & 3 & 13/18 & propagation ($3$), cost bearer ($2$) \\
    Top archetype       & 3 & 15/18 & legibility ($1$), timing ($1$),
      cost bearer ($1$) \\
    \midrule
    \textbf{Overall}    & \textbf{9} & \textbf{41/54 ($76\%$)} &
      legibility ($4$), cost bearer ($4$), propagation ($3$), timing ($1$),
      locus of pressure ($1$) \\
    \bottomrule
  \end{tabular}
  \smallskip\\
  {\small\textit{Note.} A blind pass classified each generated narrative into a
    switch vector without access to the source archetype or the generator's
    self-report. A recovered dimension indicates that the design choice is
    legible in the instrument description. Self-reported (non-blind) matches were
    $54/54$ and are not reported as evidence. Generated with Claude Sonnet~5;
    outputs are committed verbatim in the supplement.}
\end{table}

\paragraph{A useful control, and what does not follow.} A narrative that
explicitly described its own attributability, noting that sustained analysis
would eventually trace financing back to A, had its legibility recovered
correctly. This indicates that the blind pass distinguishes the value when the text
marks it, and that the emigration failures reflect absent cues rather than
indiscriminate classification. Several limits nonetheless bound what this run
establishes. Running each pass in a separate incognito session precludes
conversational or memory-based leakage, but both passes used the same model, so
what is withheld is the information, not the model's priors; genuine
independence requires a different classifier, and inter-rater agreement cannot
be estimated from a single pass. Nine generations from one prompt cannot
separate prompt-specific from model-specific behavior. And the model is
version-contingent: the prompt is fixed and published, but exact regeneration is
not guaranteed, so the committed outputs rather than the procedure are the
reproducible artifact.

\paragraph{Interpretation.} Read conservatively, the run shows that the
pipeline's final stage executes, that its output can be audited mechanically,
and that the audit must be blind to be worth anything. The dimensions that
failed to recover --- legibility, cost bearer, propagation pathway --- indicate
where the toy switch vector does not pin down an instrument, and are therefore
examples of where sharper definitions are needed. This is the
archetype-to-instrument gap of Section~\ref{sec:gap}. Nothing here establishes
that the generated instruments would function against a real target, which
requires the wargaming and structured expert review identified in
Section~\ref{sec:gap}; the narratives are analytic artifacts in a synthetic
setting, subject to the descriptive-not-prescriptive posture of
Section~\ref{sec:ethics}.

\subsection{Ethical Concerns}
\label{sec:ethics}
The IGI framework is analytic, not prescriptive. It characterizes how
indirect mechanisms produce compliance; it does not recommend their use,
and it does not adjudicate the legality or legitimacy of any mechanism the
optimizer surfaces, which may be lawful, unlawful, or of ambiguous status
under domestic and international law. We flag this deliberately because
the model's structure rewards mechanisms that are deniable and that shift
costs onto the target's own population and firms---designs whose ethical
and legal character demands scrutiny that a compliance-maximizing
objective does not supply. The same machinery that identifies effective
influence mechanisms identifies the mechanisms a \emph{defender} should
anticipate and harden against; the framework's defensive use (mapping one's
own vulnerabilities to indirect influence) is at least as natural as its
offensive use. Operational deployment should pair the compliance
objective with explicit legal and normative constraints, treating the
optimizer's output as a hypothesis space to be filtered, not a set of
recommendations to be executed. Making the deniability and cost-incidence
properties explicit, rather than burying them, is what allows that
filtering to occur.

\subsection{Limitations and Future Work}
\label{sec:future}
The switch vector is discrete by design, decomposing mechanism space into
a finite set of enumerable configurations. This reflects the
interpretability requirement (Section~\ref{sec:intro}): a discrete switch
vector lets analysts reason explicitly about configurations and compare
them on a common footing. But the discrete structure imposes boundaries a
continuous treatment would not; many real mechanisms do not fall cleanly
into a single switch value. A natural generalization replaces the
discrete switch vector with a continuous latent space, drawing on the
variational switching state-space framework of \citet{GhahramaniHinton2000}.

Three further extensions are salient under the mode-switching model. First,
the expected-trajectory hazard (Remark~\ref{rem:hazard}) should be
extended to the full stochastic form with Monte-Carlo over switching
times; this reintroduces the Markov-switching character of 
\citet{Hamilton1989} in a designed rather than inferred form, and lets 
latent mechanism configurations recovered from historical episodes seed 
the generative optimizer, uniting the inferential and generative 
directions. Second, the investment path $\{a_t\}$ should be optimized 
jointly with $\sv$ (Section~\ref{sec:opt_problem}); because investment 
intensity drives exposure, the optimal path trades compliance accumulation 
against dwell time, a genuinely dynamic problem the uniform-investment 
PoC does not address. We leave these, and the associated calibration 
agenda, to future work. Third, the model is bilateral. $A$ acts alone and 
$B$ has no outside options, yet the empirical sanctions literature finds 
that third parties materially condition outcomes. Allied and adversary 
states can absorb or offset a sender's pressure \citep{Early2012,Early2015}, 
and sender coalitions perform differently from unilateral senders 
\citep{BapatMorgan2009}. In the present framework a third party would enter 
as a competing momentum source with its own routing vector, or as a 
reduction in the effective credibility $\delta$; both are natural 
extensions of the pressure vector, and neither is developed here.

\section{Conclusion}
\label{sec:conclusion}
We have presented a formal, extensible framework for indirect geoeconomic
influence that takes the target's internal transmission structure
seriously and models the moment that structure switches. Its central
commitment is that a target's political economy under influence evolves
under mode-dependent rules (permissive before detection, contested
after) and that the sender's mechanism design shapes the transition
between them. This inverts the standard regime-switching problem: the
transition kernel is designed rather than estimated, and the policy
problem becomes steering regime occupancy over a horizon. A structured
switch vector decomposes mechanism space, a combinatorial optimizer
searches it, and cost-effectiveness metrics place mechanisms on a common
frontier. The proof-of-concept demonstrates tractability, not calibration.
The clearest next steps are empirical: eliciting the switch-to-parameter
and mode-transition mappings, validating dwell-time and durability 
predictions against historical episodes, and extending the discrete switch 
space and the expected-trajectory hazard toward their continuous and fully 
stochastic generalizations. The framework is designed for exactly this 
iterative refinement.

\section*{Data and Code Availability}
\label{sec:availability}
The proof-of-concept implementation, the generated instrument
descriptions, the blind-classification results, the shuffling key, and
the scoring scripts are available from the corresponding author.

\bibliographystyle{plainnat}
\bibliography{geo_framework}

\appendix

\section{Generation and Classification Prompts}
\label{app:prompt}

This appendix records the prompts used in the proof-of-concept run of the
archetype-to-instrument stage (Section~\ref{sec:llm-stage}). Both passes used
Claude Sonnet~5 on 22~July 2026, each in a fresh incognito session with default
decoding settings. Nine generations were produced (three per archetype) and
classified in a single blind pass. The generated instruments, the blind
classification results, the shuffling key, and the scoring scripts are provided
in the supplement.

\paragraph{A note on vocabulary.} The prompts below are transcribed as
executed and therefore use the switch-dimension vocabulary current at the
time of the run. In particular they name the second dimension
\emph{target subsystem} with values Government / Private sector / Civil
society / Mixed, whereas Table~\ref{tab:switches} now names it
\emph{locus of pressure} and separates it from \emph{target population}.
The revision does not change the arity of the dimension or the results
reported in Section~\ref{sec:llm-stage}.

\subsection{Generation Prompt}
\label{app:prompt.gen}

The prompt below is transcribed as used for the committed generations. The
bracketed archetype values were replaced with the six switch values of the
mechanism being instantiated.

\begin{small}
\begin{verbatim}
You are assisting with an analytic exercise in a fully fictional
setting. There are two fictional polities: Civilization A (the
sender) and Civilization B (the target). A seeks to shift B's
dominant resource-allocation policy toward cooperative exchange,
but cannot mandate this directly; it must design an indirect
mechanism that routes pressure through B's internal subsystems so
that B's own actors generate the compliance pressure.

B has three internal subsystems: its Government (sets official
policy, can detect and resist), its Private sector (controls
resource infrastructure), and its Civil society (mobility-sensitive
population).

An abstract mechanism is specified by six structural design choices
(a "switch vector"):

- Influence timing: {Latent | Active | Mixed} - whether influence
  operates before explicit deployment (an unexercised option doing
  work) or only after.
- Target subsystem: {Government | Private sector | Civil society |
  Mixed} - which subsystem bears primary pressure.
- Propagation pathway: {Direct | Cascade | Systemic} - how pressure
  moves through B's internal system.
- Momentum persistence: {Low | Medium | High} - whether the
  mechanism self-sustains or requires ongoing investment.
- Legibility to B: {Transparent | Obscured | Deniable} - how easily
  B's government identifies and attributes the mechanism.
- Cost bearer: {A | B | Shared} - who absorbs the mechanism's
  primary cost.

Here is the archetype to instantiate:

    Influence timing:     <FILL>
    Target subsystem:     <FILL>
    Propagation pathway:  <FILL>
    Momentum persistence: <FILL>
    Legibility to B:      <FILL>
    Cost bearer:          <FILL>

Produce a concrete candidate instrument that faithfully instantiates
THIS archetype in the Civilization A/B setting. Do not change the
design choices; your instrument must reflect them. Return your
answer as JSON only, with no prose outside the JSON, in exactly this
structure:

{
  "mechanism_name": "<short evocative label>",
  "narrative": "<3-5 sentence description of how the instrument
    works: who is targeted, what A does, how pressure routes to a
    policy outcome>",
  "activation_trigger": "<what sets it in motion>",
  "expected_behavioral_response": "<what B's actors do>",
  "durability_profile": "<what happens if A stops investing>",
  "self_reported_switches": {
    "influence_timing": "<the value you instantiated>",
    "target_subsystem": "<...>",
    "propagation_pathway": "<...>",
    "momentum_persistence": "<...>",
    "legibility_to_b": "<...>",
    "cost_bearer": "<...>"
  }
}

The "self_reported_switches" field must state which switch values
your instrument actually reflects, judged on its own merits - not
simply copied from the input. If your instrument departs from the
requested archetype on any dimension, report the value it actually
reflects.
\end{verbatim}
\end{small}

\paragraph{A subsequent revision, not used here.} Two of the committed
generations restate taxonomy vocabulary inside the narrative fields (for
example, justifying a durability profile by reference to low momentum
persistence). Because such restatement makes a round-trip classification
partly circular, the version of the prompt distributed in the supplement adds an
instruction forbidding the switch-dimension names and their values in the
narrative fields, confining that vocabulary to the self-report. That
instruction was added after the committed generations were produced and did not
apply to them; results in Section~\ref{sec:llm-stage} reflect the prompt exactly
as transcribed above.

\subsection{Blind Classification Prompt}
\label{app:prompt.blind}

For the blind pass, each generated instrument was reduced to its narrative
fields---mechanism name, narrative, activation trigger, expected behavioral
response, and durability profile---with the source archetype and the generator's
self-report removed. The nine reduced descriptions were shuffled under a fixed
seed and presented one at a time, in a session with no access to the generation
pass.

\begin{small}
\begin{verbatim}
You are classifying a described policy instrument against a fixed
taxonomy. You will see only a description of a candidate mechanism
in a fictional setting (Civilization A influencing Civilization B).
Judge only from the description.

Assign one value on each of six dimensions:

- influence_timing: Latent | Active | Mixed
  Latent = an offer/capability does its work while unexercised or
  dormant. Active = pressure only builds through current, live
  deployment.
- target_subsystem: Government | Private sector | Civil society |
  Mixed. Which of B's subsystems bears the primary pressure.
- propagation_pathway: Direct | Cascade | Systemic
  Direct = pressure lands on the target actors and stops there.
  Cascade = it spreads actor-to-actor through peer/social/
  competitive imitation. Systemic = it becomes embedded in shared
  infrastructure/standards affecting the whole system at once.
- momentum_persistence: Low | Medium | High
  Judge from what happens if A stops investing: Low = collapses
  quickly, High = self-sustaining.
- legibility_to_b: Transparent | Obscured | Deniable
  Transparent = B's government can plainly identify A's hand and
  the demand. Obscured = attribution is possible with effort.
  Deniable = B's government cannot attribute it; there is no
  visible demand from A. NOTE: if A at any point issues a
  conditional demand or signals that benefits depend on B's policy,
  that counts against deniability.
- cost_bearer: A | B | Shared
  Who absorbs the mechanism's primary cost.

Return JSON only, no prose:

{
  "switch_vector": {
    "influence_timing": "<value>",
    "target_subsystem": "<value>",
    "propagation_pathway": "<value>",
    "momentum_persistence": "<value>",
    "legibility_to_b": "<value>",
    "cost_bearer": "<value>"
  },
  "rationale": "<one sentence per dimension, brief>"
}

Here is the instrument to classify:
\end{verbatim}
\end{small}

\paragraph{An operational criterion beyond the taxonomy.} The rubric above
defines \emph{legibility} partly by whether A issues an explicit conditional
demand. That criterion does not appear in the switch vector as specified in
Table~\ref{tab:switches}, where the dimension is defined by how readily B
identifies and attributes the mechanism. It was added to make the dimension
decidable from a narrative alone, and the classifier's recorded rationales show
it carrying most of the weight on that axis. Section~\ref{sec:llm-stage} treats
the resulting misclassifications accordingly.

\paragraph{Scoring.} Recovered switch vectors were compared to the true
archetypes dimension by dimension. The comparison is exact-match on the six
categorical values; no partial credit is assigned, and the rationale field was
not scored. Section~\ref{sec:llm-stage} reports the results, and the scoring
script is provided in the supplement.

\end{document}